\newcommand{\removelatexerror}{\let\@latex@error\@gobble}
\newtheorem{thm}{Theorem}
\newtheorem{problem}{Problem}
\newtheorem{lemma}{Lemma}
\newtheorem{corollary}{Corollary}
\newtheorem{definition}{Definition}
\tikzstyle{block} = [draw, rectangle, minimum size=3em]
\tikzstyle{bigblock} = [draw, rectangle, minimum height=7em, minimum width=11em]
\pgfplotsset{compat=newest}
\theoremstyle{remark}
\tikzset{>=latex}
\theoremstyle{exampstyle}
\title{\LARGE\bf Policy Synthesis for Switched Linear Systems with Markov Decision Process Switching}
\author{Bo Wu, Murat Cubuktepe, Franck Djeumou, Zhe Xu, and Ufuk Topcu
	\thanks{ Bo Wu, Murat Cubuktepe, Franck Djeumou, Zhe Xu, and Ufuk Topcu are with the Department of Aerospace Engineering
and Engineering Mechanics, and the Oden Institute for Computational
Engineering and Sciences, University of Texas, Austin, 201 E 24th
St, Austin, TX 78712. email: {\tt\small $\{$bwu3, mcubuktepe, fdjeumou, zhexu, utopcu$\}$@utexas.edu}}}
\begin{document}
\maketitle

\begin{abstract}
We study the synthesis of mode switching protocols for a class of discrete-time switched linear systems in which the mode jumps are governed by  Markov decision processes (MDPs). We call such systems MDP-JLS for brevity.
Each state of the MDP corresponds to a mode in the switched system. The probabilistic state transitions in the MDP represent the mode transitions. We focus on finding a policy that selects the switching actions at each mode such that the switched system that follows these actions is guaranteed to be stable. Given a policy in the MDP, the considered MDP-JLS reduces to a Markov jump linear system (MJLS). {We consider both mean-square stability and stability with probability one. For mean-square stability, we leverage existing stability conditions for MJLSs and propose efficient semidefinite programming formulations to  find a stabilizing policy in the MDP.  For stability with probability one, we derive new sufficient   conditions and compute a stabilizing policy using linear programming. We also extend the policy synthesis results to  MDP-JLS with uncertain mode transition probabilities. 
}
\end{abstract}
\begin{IEEEkeywords}
	switched systems, Markov decision processes, optimization
\end{IEEEkeywords}

\section{Introduction}
Switched linear systems \cite{liberzon2003switching,sun2006switched} which  consist of a set of modes with linear dynamics  and a switching logic that describes the evolution of the modes, 
have recently found a broad range of applications, e.g.,  in robotics \cite{zegers2018distributed,xucontroller}, wireless sensor networks \cite{wu2014stability,zhang2017energy,wu2017formal}, networked control systems \cite{cetinkaya2018analysis}, security and privacy \cite{wu2018privacy,wu2020privacy}.

The switching logic for switched systems can be autonomous or controlled \cite{zhang2008l_}. {The former may be  the result of the system's own characteristics or an influence of its  environment, and the latter may be due to the designer's deliberate intervention.} 

In this paper, we study a class of the switched linear systems, where the switching logic is characterized by   Markov decision processes (MDPs) \cite{puterman2014markov}. \added{We name such systems MDP-JLS for brevity.} The MDP in an MDP-JLS includes a set of states that correspond to the modes in a switched system, a set of actions, and a  transition relation that defines the probability of transiting from the current mode to the next under a particular action. The mode switching in this MDP captures both deliberate intervention through the action selection and the environment or system uncertainties through the corresponding probabilistic mode transitions. Given a policy that selects the switching actions at each mode, an MDP-JLS reduces to a Markov jump linear system (MJLS) \cite{costa2006discrete} , where the mode switches in the system follow a discrete-time Markov chain (DTMC). 


We are interested in synthesizing a stabilizing policy for an MDP-JLS and consider both mean-square stability and stability with probability one of the induced MJLS. 
 We first show that policies that deterministically select an action in each mode of the MDP are not sufficient to \added{stabilize an MDP-JLS.}

For mean-square stability,  we introduce two approaches  to compute the stabilizing policies  based on existing stability conditions for MJLSs \cite{costa2006discrete,bolzern2015positive,shi2015survey,saravanakumar2017stability}. 
The first approach provides a sufficient condition for  a  policy  to stabilize an MDP-JLS and \added{formulates the policy synthesis problem  as a semidefinite programming problem that results in a  simultaneous search for both a 
stabilizing policy and a diagonal Lyapunov function for each mode. 
In the  second  approach, we  partition the variables for the policy and Lyapunov functions into two groups on which we perform coordinate descent~\cite{razaviyayn2013unified,shen2017disciplined}.} 
We alternate between searching for candidate Lyapunov functions and searching for a policy that satisfies the stability conditions by solving a semidefinite programming problem while fixing the other block of variables. 

For stability with probability one, we find stabilizing polices based on new sufficient stability conditions. These conditions extend the average dwell-time constraints for the stability in  non-stochastic switched linear systems \cite{hespanha1999stability,zhang2008l_,zhao2011stability} to MJLSs. More precisely, comparing with the traditional average dwell-time constraints, which require that the average time interval between any two consecutive mode switching is above a certain threshold, \added{The proposed stability conditions establish a lower bound for the probability of mode switching. 
Such conditions translate into constraints on the stationary distribution of the  induced DTMC}, based on which we solve policy synthesis  efficiently as a linear programming problem. 
We additionally extend the policy synthesis to   MDP-JLSs with uncertain mode transition probabilities and  the optimization of the expected   state-dependent costs.


We illustrate the use of the proposed methodologies with  two examples. 
For mean-square stability, we show that the method based on coordinate descent outperforms the semidefinite programming method.  For stability with probability one, since mean-square stability implies stability with probability one, the coordinate descent method is also applicable but scales poorly with the dimension of the linear dynamical systems in the modes. {On the other hand, we observe in the numerical examples that the computation time for the linear programming approach is not sensitive to the dimension of the dynamics in the modes and thus is more scalable. In oru experiments, the computation time of the linear programming approach is only a fraction (as low as $0.02\%$) of that of the coordinate descent method.}

\noindent {\bf{Related work.}} 
Stability and stabilization are major concerns for switched linear systems and have been extensively studied in the literature. The most popular stability analysis approaches for systems with arbitrary switching include common and multiple Lyapunov functions \cite{narendra1994common, branicky1998multiple} and (average) dwell-time conditions \cite{morse1996supervisory,hespanha1999stability,zhang2008exponential}. However, in many practical systems, switching between modes is often constrained due to physical limitations, and one usually has control over switching \cite{weiss2007automata}. As a result,  it is of interest to synthesize controllers that regulate the mode switching to stabilize the system while respecting  constraints on switching. In \cite{wang2016stability}, the authors studied a switched linear autonomous system where a finite automaton generates the mode switching sequences. Such a model considers non-stochastic mode switches governed by a finite automaton while the model proposed in this paper considers probabilistic  mode switches governed by an MDP. {Furthermore,  the focus of \cite{wang2016stability}  is only stability analysis   rather than stabilization.  References \cite{jha2010synthesizing,koo2001mode,liu2013synthesis} considered  switching controller synthesis for safety,  reachability   and temporal logic specifications, though only for switched systems with non-stochastic switching.}

{Recall that, for a given policy, an MDP-JLS reduces to an MJLS. For mean-square stability of an MJLS, there exist conditions that are both necessary and sufficient or only sufficient\cite{costa2006discrete,zhang2008analysis}. 
There also exist results for MJLS stability analysis that involve average dwell-time \cite{bolzern2010markov,chen2012h}. However, the average dwell-time in those papers refers to how frequently the transition probabilities of the DTMC change. While in MDP-JLSs, mode transition probabilities do not change over time and the average dwell-time represents the probability of mode switching.} 

\noindent {\bf{Contributions.}} \added{The contributions of this paper are three-fold.}
\begin{itemize}
    \item {We propose MDP-JLS, a new modeling framework for a class of switched linear systems where MDPs describes the switching logic.}
    \item {To guarantee mean-square stability, we study policy synthesis problems for MDP-JLSs with existing stability conditions of MJLSs and formulate them as  optimization problems  based on semidefinite   programming.}
    \item {For stability with probability one, we derive new stability conditions  with constraints in the probability of mode switching. We also consider MDP-JLSs with uncertain mode  transition probabilities and  the optimization of the expected average cost incurred by the switching actions.}
\end{itemize}
We organize the rest of this paper  as follows. Section \ref{sec:preliminaries} introduces the modeling framework and necessary definitions. Section \ref{sec:Problem Formulation} formulates the policy synthesis problem. {Solutions are proposed in Section \ref{sec:Stability Guaranteed Policy Synthesis} and Section \ref{sec:policy synthesis for stability with probability one} for mean-square stability and stability with probability one, respectively. Section \ref{sec:Examples} provides two examples to show the validity of the proposed solutions and compare their performances.} Section \ref{sec:Conclusion} concludes the paper and discusses future directions.


\section{Preliminaries}\label{sec:preliminaries}
In this section, we describe preliminary notions and definitions used in the sequel.

\emph{Notations:} $|S|$ denotes the cardinality of a set $S$. Given a real matrix $A\in\mathbb{R}^{m\times n}$, $A'$ denotes its transpose. We use $\mathbf{1}$ to denote a column vector with all elements equal to $1$.  $||A||_\infty:=\max_{1\leq i\leq m}\sum_{j=1}^n|a_{ij}|$. If $m=n$, $\rho(A)$ represents the spectral radius of $A$, i.e., $\rho(A):=\max_i|\lambda_i|$ where $\lambda_i,i\in\{1,\ldots,n\}$ are eigenvalues of $A$. Furthermore, $A>0$ ($A\geq 0$) denotes that the matrix $A$ is positive definite (positive semidefinite). $E[.]$ stands for computing the expectation. $\otimes$ denotes the Kronecker product. 
For $A_i\in\mathbb{R}^{n\times n},i\in\{1,\ldots,N\}$,  $diag(A_i)\in\mathbb{R}^{Nn\times Nn}$ represents the block diagonal matrix formed with $A_i$ at the diagonal and zero anywhere else, i.e.
$$
diag(A_i):=\begin{bmatrix}
    A_1 & 0 & 0  \\
    0 & \ddots & 0\\
    0 & 0 & A_N
\end{bmatrix}.
$$

\subsection{Switched Linear Systems}
Mathematically, a discrete-time switched linear system is described by 
\begin{equation}\label{eq:switched systems}
    x(k+1)=A_{s_k}x(k),
\end{equation}
where $x(k)\in\mathbb{R}^n$ is the state vector, $A_{s_k}\in\mathbb{R}^{n\times n}$  implies a matrix $A\in \{A_1,\ldots,A_{|S|}\}$.  The linear dynamics of (\ref{eq:switched systems}) is given by matrices $A_i$ when $s_k=i$, i.e, the  mode that the system is in  at time $k$. 



\subsection{Switched Linear Systems with Markov Decision Process Switching}
The system in (\ref{eq:switched systems}) in its general form is a hybrid system where the mode switches could depend on both the continuous dynamics and discrete mode \cite{liberzon2003switching,lin2009stability}.  In this paper, we consider a class of switched linear systems, where the mode switches are governed by a Markov decision process (MDP) \cite{puterman2014markov} defined as follows.
\begin{definition}
	An MDP is a tuple $\mathcal{M}=(S,\hat{s},\Sigma,T)$ which includes a finite set $S$ of states,
 an initial state $\hat{s}$,  a finite set $\Sigma$ of actions. 
		$T:S\times \Sigma\times S\rightarrow [0,1]$ is the probabilistic transition function with $
		T(s,\sigma,s'):=p(s'|s,\sigma),   \text{ for}\; s,s'\in S \text{ and } \sigma\in \Sigma$. We denote the number of modes, i.e., $|S|$ as $N$. 
\end{definition}

	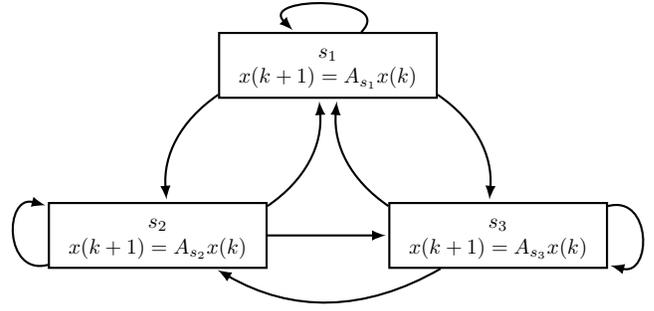
\begin{figure}[t!]
		\centering	
		\begin{tikzpicture}[shorten >=1pt,node distance=4cm,on grid,auto, thick,scale=0.8, every node/.style={transform shape}]
		\node[block] (q_1)   {\begin{tabular}{c}
		     $s_1$ \\ $x(k+1)=A_{s_1}x(k)$
		\end{tabular}};
		\node[block] (q_2) [below left = 4cm of q_1] {\begin{tabular}{c}
		     $s_2$ \\ $x(k+1)=A_{s_2}x(k)$
		\end{tabular}};
		\node[block] (q_3) [below right =4cm of q_1] {\begin{tabular}{c}
		     $s_3$ \\ $x(k+1)=A_{s_3}x(k)$
		\end{tabular}};

		\path[->]
		(q_1) edge [loop, above, looseness=2] node  {} (q_1)
		(q_1) edge [pos=0.5, bend right, above=0.5,sloped] node {} (q_2)
		(q_1) edge [pos=0.5, bend left, above=0.5,sloped] node {} (q_3)
		
		(q_2) edge [pos=0.5, loop left, looseness=2] node  {} (q_2)
		(q_2) edge [pos=0.5, bend right, above=0.5,sloped] node {} (q_1)
		(q_2) edge [pos=0.5, above=0.5] node {} (q_3)
		
		(q_3) edge [pos=0.5, loop right, looseness=2] node {} (q_3)
		(q_3) edge [pos=0.5, bend left, above=0.5,sloped] node {} (q_1)
		(q_3) edge [pos=0.5, bend left, above=0.5,sloped] node {} (q_2)
		;
		\end{tikzpicture}
		\caption{An MDP-JLS with three modes. Transition probabilities and actions are omitted.}\label{fig:mdp}
	\end{figure}
For simplicity, we denote $T_\sigma\in\mathbb{R}^{N\times N}$ as the transition probabilities induced by an action $\sigma\in\Sigma$ between state pairs. 
If $\sigma$ is not defined on a state $s_i$, $T(s_i,\sigma,s_j)=0$ for any $s_j\in S$. For notational simplicity, in the sequel we write $T(i,\sigma,j)$ instead of $T(s_i,\sigma,s_j)$. Then we formally define the system that we study in this paper as follows.

\begin{definition}
    An MDP-JLS is a switched system defined in (\ref{eq:switched systems}) with the mode switches governed by an MDP $\mathcal{M}=(S,\hat{s},\Sigma,T)$. 
\end{definition}

We shown an example of an MDP-JLS in Figure \ref{fig:mdp}.There are three linear dynamics corresponding to three modes, respectively, i.e.
$$
x(k+1)=A_{s_i}x(k),\text{ for } s_i\in S,
$$
\added{where} each state $s$, there is a set of actions available to choose. The nondeterminism of the action selection is resolved by a policy $\pi$.

\begin{definition}
	A (randomized) policy $\pi:S\times \Sigma \rightarrow [0,1]$  of an MDP $\mathcal{M}$ is a function that maps every state action pair $(s,\sigma)$ where $s\in S$ and $\sigma\in \Sigma$  with a probability $\pi(s,\sigma)$.    
\end{definition}
By definition, the policy $\pi$ specifies the probability to take an action $\sigma$  at a state $s$. For notational simplicity, we use $\pi(i,\sigma)$ for $\pi(s_i,\sigma)$.  As a result, given a policy $\pi$, the MDP $\mathcal{M}$ reduces to a discrete-time Markov chain (DTMC) $\mathcal{C}=(S,\hat{s},P)$. The matrix $P$ represents the transition probabilities and can be calculated by 
$$
P(s_i,s_j):=P_{ij}=\sum_{\sigma \in \Sigma} T(i,\sigma,j)\pi(i,\sigma),
$$
where $P_{ij}$ is the $(i,j)$ element of $P$ that denotes the probability of transitioning from $s_i$ to $s_j$ in one step. We can naturally generalize this definition to $n$-step transition probabilities where $P_{ij}^n$ denotes the probability of transitioning from $s_i$ to $s_j$ with $n$ steps, and $P_{ij}^n$ is the $(i,j)$ element of $P^n:=\prod_n P$. 

We denote the probability of being in a state $s_i$ at time step $n$ as $p^n(s_i)$ which can be computed by $P^n$ and the initial state. A stationary distribution (if it exists) over the states is a vector $\bar{p}\in\mathbb{R}^{|S|}$ that
$$
\bar{p}(s_i)=\sum_{s_j} \bar{p}(s_j)P_{ji}.
$$
In a DTMC $\mathcal{C}=(S,\hat{s},P)$, state $s_j$ is \emph{accessible} from $s_i$ if $P_{ij}^n>0$ for some non-negative $n<\infty$. Two states $s_i$ and $s_j$ are said to \emph{communicate} if $s_i$ is accessible from $s_j$ and $s_j$ is accessible from $s_i$. A state $s_i$ is said to be \emph{recurrent} if it is accessible from all states that are accessible from $s_i$. A state that is not recurrent is called \emph{transient}. A state $s_i$ is said to be aperiodic if $gcd(\{n|P^n_{ii}>0\})=1$ where $gcd$ stands for the greatest common divisor. A class $\mathcal{X}\subseteq S$ is a non-empty set of states where each $s_i\in\mathcal{X}$ communicates with every other state $s_j\in\mathcal{X}$ and communicates with no state $s_j\notin\mathcal{X}$. An \emph{ergodic class} is a class that consists of states that are both recurrent and aperiodic.
\begin{definition}
{A DTMC is an ergodic unichain if it contains a single ergodic class and maybe some transient states. An MDP is an ergodic unichain if every policy $\pi$ induces a DTMC that is an ergodic unichain. }
\end{definition}
{If a DTMC is an  ergodic unichain,   its stationary distribution $\bar{p}$ exists and is unique, and $\lim_{n\to\infty}p^n=\bar{p}$ \cite{gallager2013stochastic}. }

\subsection{Markov Jump Linear Systems}
The Markov jump linear system is defined as follows \cite{costa2006discrete}.
\begin{definition}
    A Markov jump linear system (MJLS) is a switched system defined in (\ref{eq:switched systems}) with the mode switches governed by a DTMC $\mathcal{C}=(S,\hat{s},P)$. 
\end{definition}

Given an MDP-JLS with an MDP $\mathcal{M}=(S,\hat{s},\Sigma,T)$ and a policy $\pi$, the resulting system is an MJLS whose mode switches  can be characterized by the DTMC $\mathcal{C}$ induced from the policy $\pi$.
If the system (\ref{eq:switched systems}) is in mode $s_i$, then the probability that it switches to mode $s_j$ is given by $P_{ij}$.

For MJLS analysis, stability is one of the major concerns. Several notions of stability exist in the literature \cite{shi2015survey}. In this paper, we are interested in mean-square stability and stability with probability one as defined below.

\begin{definition}\cite{costa2006discrete}
    An MJLS with the dynamics in \eqref{eq:switched systems} is said to be mean-square stable   if
    \begin{equation}\nonumber
    \begin{split}
    \lim_{k\rightarrow\infty}||E[x(k) x'(k)]||_\infty= 0  
    \end{split}
    \end{equation}
    for any initial condition $x_0$.
\end{definition}

\begin{definition}\cite{costa2006discrete}
    {An MJLS with the dynamics in \eqref{eq:switched systems} is said to be stable with probability one if
 $$
 \lim_{k\rightarrow\infty}||x(k)||_\infty= 0 \text{ with probability one}
$$
    for any initial condition $x_0$.}
\end{definition}
  {Mean-square stability is shown to imply stability with probability one \cite{costa2006discrete}. }

\subsection{Optimization basics}

In this paper, we use optimization problems such as linear programs (LPs), semidefinite programs (SDPs) and bilinear matrix inequalities (BMIs) extensively in the policy synthesis. We briefly define them as follows.

An LP is an optimization problem with a linear objective and constraints on the variable $y \in \mathbb{R}^n$, which is given by 

\begin{align}
    \text{minimize} &\quad c'y\label{eq:lp_objective}\\
  \text{subject to}&  \quad\nonumber\\
    &\quad Ay=b,\label{eq:lp_equality}\\
   &\quad \displaystyle x\geq 0,\label{eq:lp_positive}
\end{align}
where $A\in \mathbb{R}^{m \times n}$ is a given matrix, and $c \in \mathbb{R}^n, b \in \mathbb{R}^m$ are given vectors. The LP in this form is also referred to as the \emph{standard form}~\cite[Chapter 4.3]{Boyd}. LPs are convex optimization problems and can be solved efficiently using interior point methods~\cite{nesterov1994interior,Boyd}.

An SDP is an optimization problem with a linear objective, linear equality constraints and a matrix nonnegativity constraint on the variable $y \in \mathbb{R}^{n}$, which can be written as
\begin{align}
    \text{minimize} &\quad c'y\label{eq:sdp objective}\\
  \text{subject to}&  \quad\nonumber\\
    &\quad Ay=b,\label{eq:sdp_equality}\\
   &\quad \displaystyle \sum^n_{i=1} y_i F_i\geq F_0,\label{eq:sdp_positive}
\end{align}
where $F_0,\ldots,F_m \in \mathbb{R}^{p \times p}$, are given symmetric matrices, $A \in \mathbb{R}^{m \times n}$ is a given matrix, and $c \in \mathbb{R}^n, b \in \mathbb{R}^m$ are given vectors. SDPs are convex optimization problems, and is a generalization of LPs. SDPs can also be solved efficiently~\cite{nesterov1994interior,Boyd}. The constraint in~\eqref{eq:sdp_positive} is named as a linear matrix inequality (LMI), and it is a convex constraint in $y$.

A BMI can be written as the following form:
\begin{align*}
 &\quad \displaystyle \sum^n_{i=1} y_j F_i+\sum^m_{j=1} z_j G_j+\sum^n_{i=1}\sum^m_{j=1} y_i z_j H_{ij}\geq F_0,
\end{align*}
where $F_i, G_j, H_{ij} \in \mathbb{R}^{p \times p}$ for $i={1,\ldots,n}$ and $j={1,\ldots,m}$ are given symmetric matrices, and $x \in \mathbb{R}^n, y \in \mathbb{R}^m$ are a vector of variables. A BMI is an LMI in $y$ for fixed $z$ and an LMI in $z$ for fixed $y$. The bilinear terms in a BMI make the feasible set not jointly convex in $y$ and $z$ and it is generally hard to find a feasible solution to a BMI~\cite{vanantwerp2000tutorial}.

\section{Problem Formulation}\label{sec:Problem Formulation}
In traditional MDP literature, finding a policy for an optimized expected cost \cite{puterman2014markov} or to satisfy a specification in temporal logic \cite{baier2008principles} is of the primary concern. However, in this paper, we are interested in synthesizing a policy $\pi$ in an MDP that governs the switches of a dynamical system defined in (\ref{eq:switched systems}). In this case, the objective is to stabilize an MDP-JLS.         


\begin{problem}[Synthesis for  mean-square stable]\label{problem:mean-square stability}
\added{Given an MDP-JLS, find a policy $\pi:S\times \Sigma\rightarrow [0,1]$ such that the resulting MJLS is mean-square stable.}
\end{problem}

\begin{problem}[Synthesis for stability with probability one]\label{problem:stability with probability one}
 \added{ Given an MDP-JLS, find a policy $\pi:S\times \Sigma\rightarrow [0,1]$ such that the resulting MJLS is stable with probability one.}
\end{problem}

\section{mean-square Stability Guaranteed Policy Synthesis}\label{sec:Stability Guaranteed Policy Synthesis}
This section solves Problem \ref{problem:mean-square stability}. Since an MDP-JLS will reduce to an MJLS with a policy, we first review some stability conditions in MJLS that we will leverage to synthesize policies. 
\subsection{Stability Conditions}
We give two necessary and sufficient stability conditions for an MJLS as the following.
\begin{thm}\cite{costa2006discrete}\label{thm:necceary and sufficient}
Given an MJLS as defined in (\ref{eq:switched systems}) whose mode $s\in S$ makes random transitions described by a DTMC $\mathcal{C}=(S,\hat{s},P)$,  the following assertions are equivalent.


\begin{enumerate}
    \item The MJLS is mean-square stable.
    \item $\rho(\mathcal{A})< 1,$ where
    $$
    \mathcal{A} = (P'\otimes I)diag(A_i\otimes A_i),
    $$
    and  $I$ is the identity matrix of a proper dimension.
    \item  There exists a $V=(V_1,\ldots,V_N)\in \mathbb{R}^{n\times n}$ with $V>0$  such that  
	\begin{equation}\label{eq:n&s condition}
	   V-\mathcal{T}(V)>0, 
	\end{equation}
	where 
	$$
	\mathcal{T}_j(V)=\sum_{i=1}^N P_{ij}A_iV_iA'_i.
	$$
\end{enumerate}
\end{thm} 

Note that the stability conditions do not depend on either the initial state $\hat{s}$ of the MDP or the initial continuous state $x(0)$.  For computational efficiency, we state a sufficient stability condition as follows.  
\begin{corollary}\cite{costa2006discrete}\label{corollary:sufficient}
Given an MJLS as defined in (\ref{eq:switched systems}) whose mode $s\in S$ makes transitions following a DTMC $\mathcal{C}=(S,\hat{s},P)$,  the MJLS is mean-square stable if there exists $\alpha_i>0$ such that the following is satisfied.

	\begin{equation}\label{eq:sufficient condition 1}
\alpha_iI-\sum_{j=1}^N P_{ij}\alpha_jA_iA'_i>0,\; i \in \lbrace{1,\ldots,N\rbrace}.
\end{equation}
\end{corollary}
The condition given in~\eqref{eq:n&s condition} can be checked by solving an SDP with $V_i$ as variables. However, the number of variables for this SDP is $n^2\cdot N$, and finding a feasible solution for the SDP can be time consuming for large $n$ and $N$. On the other hand, the condition in~\eqref{eq:sufficient condition 1} can be checked by solving an SDP with $N$ variables, and the size of the optimization problem is smaller compared to the optimization problem in~\eqref{eq:n&s condition}. 


\subsection{\added{Deterministic Policies Are Not Sufficient for Stability}}
We first show that a deterministic policy, i.e, $\pi:S\rightarrow A$ is not sufficient to guarantee the  stability of an MDP-JLS. It means that there may not exist a deterministic policy to stabilize an MDP-JLS, but there exists an randomized policy that achieves stability. 


We illustrate this fact by a counterexample. Consider a switched system with system dynamics in (\ref{eq:switched systems})
$$A_1=
\begin{bmatrix}
    0.99 & -0.56  \\
    -0.19 & 0.73
\end{bmatrix}
\text{ and }
A_2=
\begin{bmatrix}
    0.38 & -0.98  \\
    -0.66 & -0.66 
\end{bmatrix}.$$
The MDP $\mathcal{M}=(S,\hat{s},\Sigma,T)$ where $S=\{s_1,s_2\}$ and $\Sigma=\{\sigma_1,\sigma_2\}$. The transition probabilities induced by action $\sigma_1$ and $\sigma_2$ are
$$T_{\sigma_1}=
\begin{bmatrix}
    0.21 & 0.79  \\
    0.90 & 0.10 
\end{bmatrix}
\text{ and }
T_{\sigma_2}=
\begin{bmatrix}
    0.71 & 0.29  \\
    0.13 & 0.87 
\end{bmatrix}.$$
 The deterministic policy that induces a minimal spectral radius  is selecting $\sigma_1$ in both mode $1$ and $2$. The spectral radius $\rho(\mathcal{A})$ of the MJLS induced by this policy is $1.04>1$, which makes the overall system unstable. However, the policy that selects $\sigma_1$ in mode 1, and selects $\sigma_1$ in mode 2 with a probability of $0.27$ induces an MJLS that has a spectral radius of $\rho(\mathcal{A})=0.90<1$. So the system is stable according to Theorem \ref{thm:necceary and sufficient}.  Therefore, we conclude that deterministic policies are not sufficient to stabilize an MDP-JLS.


\subsection{Policy Synthesis via Bilinear Matrix Inequalities}\label{subsec:sos}
In this section, we formulate a condition based on bilinear matrix inequalities to synthesize a randomized policy to stabilize an MDP-JLS. The condition is a straightforward generalization of the linear matrix inequalities given in~\eqref{eq:n&s condition}. The following result states that we can  search for a stabilizing policy by finding a solution to a set of bilinear matrix inequalities.

\begin{thm}\label{thm:bilinear}
Consider an MDP-JLS  whose mode $s\in S$ makes transitions following a MDP $\mathcal{M}=(S,\hat{s},\Sigma,T)$ and the dynamics in each modes as in \eqref{eq:switched systems}. If there exists matrices $V_i \in \mathbb{R}^{n \times n}$, and $\pi$ such that the following holds:
\begin{align}
  &  V_i >0,\label{eq:positive_sos}\\
   & 	   V-\mathcal{T}(V)>0,\label{eq:positive2_sos}\\
&	\mathcal{T}_j(V)=\sum_{i=1}^N  P_{ij}A_iV_iA_i^{'},\label{eq:positive3_sos}\\
& P_{ij}=\sum_{\sigma \in \Sigma} T(i,\sigma,j)\pi(i,\sigma),\label{eq:induced dtmc_sos}\\
& \sum_{\sigma \in \Sigma} \pi_{i,\sigma}=1,\label{eq:policy_welldefined_sos}\\
&\pi(i,\sigma)\geq 0,\label{eq:policy_nonnegative_sos}
\end{align}
for $i,j= \lbrace{1,\ldots,N\rbrace}$ and $\sigma \in \Sigma$, then the induced MJLS is mean-square stable.
\end{thm}

\begin{proof}
Constraints \eqref{eq:induced dtmc_sos}, \eqref{eq:policy_welldefined_sos}, \eqref{eq:policy_nonnegative_sos} construct the induced DTMC $\mathcal{C}$ with transitions governed by $P_{ij}.$ Using the result of Theorem 1, the constraints \eqref{eq:positive_sos}, \eqref{eq:positive2_sos}, and  \eqref{eq:positive3_sos} ensure that the MJLS is mean-square stable with the induced DTMC $\mathcal{C}$. Hence, the existence of a policy and matrices $V_i$ that satisfies the constraints \eqref{eq:positive_sos}--\eqref{eq:induced dtmc_sos} shows that the  MJLS is mean-square stable.
\end{proof}

Note that the constraints given in~\eqref{eq:positive_sos}--\eqref{eq:policy_nonnegative_sos} are BMI constraints due to multiplication between variables $\pi$ and $V$ in ~\eqref{eq:positive2_sos}--\eqref{eq:induced dtmc_sos}, therefore it is hard in general to find a policy by solving the BMI directly. In the next section, we propose two approaches based on convex optimization that are easier to compute, and discuss their relationship with the BMI in~\eqref{eq:positive_sos}--\eqref{eq:policy_nonnegative_sos}.

\subsection{Policy Synthesis via Convex Optimization}\label{subsec:co}
In this section, we propose two methods to synthesize a policy that stabilizes an MDP-JLS. The first method is based on checking feasibility of an SDP, which is an relaxation of the original stability condition. The second method is based on applying a coordinate descent on the variables $V$ and $\pi$. We can use coordinate descent in our case efficiently, as the constraints in~\eqref{eq:positive2_sos}--\eqref{eq:induced dtmc_sos} are LMI constraints if $V$ or $\pi$ is fixed.

\subsubsection{Semidefinite Relaxation}

In the following, we state the semidefinite relaxation to compute a policy that stabilizes an MDP-JLS. the relaxation extends the stability condition given in~\eqref{eq:sufficient condition 1} for an MJLS to a switched system whose mode switches are governed by an MDP. 

\begin{thm}\label{thm:sdp}
Consider an MDP-JLS  whose mode $s\in S$ makes transitions following a MDP $\mathcal{M}=(S,\hat{s},\Sigma,T)$ and the dynamics in each modes as in \eqref{eq:switched systems}. If there exists $K_{i,\sigma},\alpha_i \in \mathbb{R}>0$ such that
\begin{align}
  &  V_i=\alpha_i I >0,\label{eq:positive_sdp}\\
   & 	   V-\mathcal{T}(V)>0,\label{eq:positive2_sdp}\\
&	\mathcal{T}_j(V)=\sum_{i=1}^N \sum_{\sigma \in \Sigma} T(i,\sigma,j)K_{i,\sigma}A_iA'_i,\label{eq:positive3_sdp}\\
& \sum_{\sigma \in \Sigma} K_{i,\sigma}=\alpha_i,\label{eq:policy_welldefined_sdp}\\
&K_{i,\sigma}\geq 0,\label{eq:policy_nonnegative_sdp}
\end{align}
for $i,j=\lbrace{1,\ldots,N\rbrace}$ and $\sigma \in \Sigma$, then the MJLS is mean-square stable.
\end{thm}

\begin{proof}
Suppose that the condition given by constraints \eqref{eq:positive_sos}--\eqref{eq:policy_nonnegative_sos} is satisfied with $V_i=\alpha_i I>0, i=\lbrace{1,\ldots,N \rbrace}$. Then, the constraint~\eqref{eq:positive3_sos} becomes
\begin{align}
&	\mathcal{T}_j(V)=\sum_{i=1}^N \sum_{\sigma \in \Sigma} T(i,\sigma,j)\pi(i,\sigma)\alpha_i A_i A'_i\label{eq:positive3_bdp}
\end{align}
with variables $\alpha_i>0, i=\lbrace{1,\ldots,N \rbrace}$ and $\pi.$ Note that for a given policy $\pi$ and the induced DTMC $\mathcal{C}$, the constraint in~\eqref{eq:positive3_bdp} is equivalent to the condition given by~\eqref{eq:sufficient condition 1} in Corollary 1. By defining the change of variable $K_{i,\sigma}= \pi(i,\sigma)\cdot\alpha_i $ for $i=\lbrace{1,\ldots,N\rbrace}$ and $\sigma\in\Sigma$, the constraints~\eqref{eq:positive3_sos}--\eqref{eq:policy_nonnegative_sos} are equivalent to the constraints in ~\eqref{eq:positive3_sdp}--\eqref{eq:policy_nonnegative_sdp}. Finding a feasible solution that satisfies the constraints in \eqref{eq:positive_sos}--\eqref{eq:policy_nonnegative_sos} yields a policy $\pi(i,\sigma)=K_{i,\sigma}/\alpha_i$ for $i=\lbrace{1,\ldots,N\rbrace}$ and $\sigma \in \Sigma$, which by construction satisfies the constraints in \eqref{eq:positive_sos}--\eqref{eq:policy_nonnegative_sos}. Therefore, the policy $\pi$ and $V$ ensures that the induced MJLS is mean-square stable. 
\end{proof}

The constraints in \eqref{eq:positive_sdp}--\eqref{eq:policy_nonnegative_sdp} are LMIs in the variables $K$ and $\alpha$. Finding a feasible solution of a set of LMIs can be done by solving an SDP. However, this condition is only a sufficient as we restrict the structure of the  matrix $V$, therefore we may not be able to certify the stability of an MJLS even though there may exists a policy that ensures that the induced MJLS is mean-square stable.





\subsubsection{Coordinate Descent}

In this section we discuss the coordinate descent (CD) approach and the differences with a basic CD algorithm. Recall that a BMI is an LMI if one the variables is fixed, and we can check if the constraints in~\eqref{eq:positive_sos}--\eqref{eq:policy_nonnegative_sos} are feasible for a fixed $V$ or $\pi$. However, applying the basic CD on $V$ and $\pi$ requires the problems to be feasible for a fixed $V$ or $\pi$, which is not necessarily true in our case. If the initial problem is feasible, then we know that $\pi$ stabilizes the induced MJLS. Therefore, we assume that the initial policy does not stabilize the system.

Our implementation differs from a basic coordinate descent algorithm in the addition of the slack variables to the constraint in~\eqref{eq:positive2_sos}, which ensures that the resulting LMI is feasible for a fixed set of variables, and we use a proximal update between the variables instead of the original update method between $V$ and $\pi$. Details about the proximal update and the convergence guarantees can be found in~\cite{xu2013block}.

We start with an initial guess of the variables $V^0$ and $\pi^0$. Then in each iteration $k$, 
we solve the following SDP for a fixed $\pi^k$:
\begin{align}
\text{minimize}&\quad -\gamma+\displaystyle\sum^N_{i=1} L||V_i-V^{k-1}_i||_2\label{eq:CD}\\
\text{subject to}&\nonumber\\
  &  \quad V_i >0,\label{eq:positive_cdv}\\
   & \quad	   V-\mathcal{T}(V)\geq \gamma I,\label{eq:positive2_cdv}\\
&	\quad\mathcal{T}_j(V)=\sum_{i=1}^N  P_{ij}A_iV_iA'_i,\label{eq:positive3_cdv}\\
&\quad P_{ij}=\sum_{\sigma \in \Sigma} T(i,\sigma,j)\pi^k(i,\sigma),\label{eq:induced dtmc_cdv}
\end{align}
where $V_i\in \mathbb{R}^{n \times n}, i=\lbrace{1,\ldots,N\rbrace}$ and $\gamma \in \mathbb{R}$ are variables, and $L \in \mathbb{R}$ is a small positive constant. After we get $V$ from \eqref{eq:CD}--\eqref{eq:induced dtmc_cdv}, the SDP we solve for a fixed $V^k=V$ is given as follows:
\begin{align}
\text{minimize}&\quad -\gamma+\displaystyle\sum^N_{i=1}\sum_{\sigma \in \Sigma} L||\pi(i,\sigma)-\pi^{k-1}(i,\sigma)||_2\\
\text{subject to}&\nonumber\\
   & 	 \quad  V^k-\mathcal{T}(V^k)\geq \gamma I,\label{eq:positive2_cdp}\\
&	\quad\mathcal{T}_j(V)=\sum_{i=1}^N  P_{ij}A_iV^k_iA'_i,\label{eq:positive3_cdp}\\
&\quad P_{ij}=\sum_{\sigma \in \Sigma} T(i,\sigma,j)\pi(i,\sigma),\label{eq:induced dtmc_cdp}\\
& \quad\sum_{\sigma \in \Sigma} \pi_{i,\sigma}=1,\label{eq:policy_welldefined_cdp}\\
&\quad\pi(i,\sigma)\geq 0\label{eq:policy_nonnegative_cdp}
\end{align}
with variables $\pi$ for $i=\lbrace{1,\ldots,N\rbrace}$ and $\sigma \in \Sigma$, and $\gamma \in \mathbb{R}$. After solving each SDP, we update the variables until we converge to a solution or we obtain a solution with $\gamma>0$. If we can find a solution with $\gamma>0$, the conditions~\eqref{eq:positive2_cdv} and~\eqref{eq:positive2_cdp} implies the condition given in~\eqref{eq:positive2_sos}, and the rest of the conditions in~\eqref{eq:positive_sos}--\eqref{eq:policy_nonnegative_sos} are already satisfied in either SDPs that we solve during CD. In this case, we stop the algorithm as the solution given by $V$ and $\pi$ guarantees that the MJLS is mean-square stable. Note that our method is guaranteed to converge as we use the update (1.3b) in~\cite{xu2013block}, however the procedure can converge to a solution with $\gamma\leq 0$, which implies that the CD method cannot certify if the MJLS is mean-square stable.


\section{\added{Policy Synthesis for Stability with Probability One}}\label{sec:policy synthesis for stability with probability one}

This section solves Problem \ref{problem:stability with probability one}.  We assume that  $\mathcal{M}$ is an ergodic unichain MDP which has a unique stationary distribution for every DTMC induced by a policy $\pi$. For a stationary distribution $p^\infty$, the probability to jump to state $s$ from a different state in one time step denoted as $\overrightarrow{p_s}$  is given by 
\begin{equation}\label{eq:jump to state s}
   \overrightarrow{p_s} =  \sum_{s'\neq s}P(s',s)p^\infty_{s'}.
\end{equation}
Furthermore, the event that there is a mode jump is Bernoulli distributed such that
\begin{equation}
        P_{jump} = 1 - \sum_sp_s^\infty P(s,s)=\sum_s \overrightarrow{p_s},
\end{equation}
where $P_{jump}$ denotes the probability of a mode jump.
\subsection{Stability with Mode-Independent Conditions}
\begin{thm}\label{thm:main}
Consider an MDP-JLS  whose mode $s\in S$ makes transitions following a MDP $\mathcal{M}=(S,\hat{s},\Sigma,T)$ and the dynamics in each modes as in \eqref{eq:switched systems} with given constants $0<\alpha<1$ and $\mu>1$. Assume there exists a Lyapunov function candidate $V(x)=\{V_s(x),s\in S\}$ that satisfies the following for any pair of $s,s'\in S$. 
\begin{align}
   & V_{s}(x_{k+1})-V_{s}(x_{k})\leq -\alpha V_{s}(x_{k}),\label{eq:same mode decrease}\text{ and }\\
   & V_{s}(x_{k})\leq\mu V_{s'}(x_{k}), \label{eq: mode jump decrease}
\end{align}
where $x_{k}:=x(k)$ for simplicity. Given a policy $\sigma$ and if the induced DTMC has a stationary distribution, then the system is stable with probability one if
\begin{equation}
    P_{jump}< \frac{\ln(\frac{1}{1-\alpha})}{\ln(\mu)}.
\end{equation}
\end{thm}

\begin{proof}
For a time horizon $k$, suppose there are $m$ mode jumps so far at time instants $k_i,i\in\{1,...,m\}$ such that $s_{k_i}\neq s_{k_i-1}$, i.e., $k_i$ is the time instant that the $i$-th jump just occurred. Then we   have
\begin{equation}
    V_{s_{k}}(x_k)\leq (1-\alpha)^{k-k_{m}}V_{s_{k_{m}}}(x_{k_{m}})
\end{equation}
from equation (\ref{eq:same mode decrease}) since the mode remains the same during the time interval $[k_{m},k]$, i.e. $s_t = s_k$ for all $t\in[k_m,k]$. Together with equations \eqref{eq:same mode decrease} and  (\ref{eq: mode jump decrease}), we   have
\begin{equation}
    \begin{split}
       V_{s_{k}}(x_k)&\leq  (1-\alpha)^{k-k_{m}}\mu V_{s_{k_{m}-1}}(x_{k_{m}})\\
       &\leq(1-\alpha)^{k-k_{m}}\mu (1-\alpha)^{k_m-k_{m-1}}\mu V_{s_{k_{m-1}-1}}(x_{k_{m-1}}),\\
       &\leq \dots \leq (1-\alpha)^k\mu^m V_{\hat{s}}(x_0)\\&=((1-\alpha)\mu^{\frac{m}{k}})^kV_{\hat{s}}(x_0).
    \end{split}
\end{equation}
To prove the system is stable with probability one, we need to prove that 
$$        P(\lim_{k\rightarrow\infty}((1-\alpha)\mu^{\frac{m}{k}})^k=0)=1,
$$
which implies that
\begin{equation}\nonumber
    \begin{split}
    P(\lim_{k\rightarrow\infty}(1-\alpha)\mu^{\frac{m}{k}}< 1)&=1\\
        \Longleftrightarrow  P(\lim_{k\rightarrow\infty}{\frac{m}{k}}< \frac{\ln(\frac{1}{1-\alpha})}{\ln(\mu)})&=1.
            \end{split}
\end{equation}
Using the law  of total probability and conditional probability, we have that
\begin{equation}\label{eq:sum}
    \begin{split}
        &P(\lim_{k\rightarrow\infty}{\frac{m}{k}}< \frac{\ln(\frac{1}{1-\alpha})}{\ln(\mu)}|\mathcal{B})P(\mathcal{B})\\&+P(\lim_{k\rightarrow\infty}{\frac{m}{k}}< \frac{\ln(\frac{1}{1-\alpha})}{\ln(\mu)}|\bar{\mathcal{B}})P(\bar{\mathcal{B}})=1,
    \end{split}
\end{equation}
where   $\mathcal{B}$ represents the event that  $\lim_{k\rightarrow\infty}\frac{m}{k}=P_{jump}$ and   $\bar{\mathcal{B}}$ represents the event that  $\lim_{k\rightarrow\infty}\frac{m}{k}\neq P_{jump}$.
According to the law of large numbers \cite{bremaud2013markov}, we know that
$$
P(\mathcal{B})=P(\lim_{k\rightarrow\infty}\frac{m}{k}=P_{jump})=1 \text{ and }P(\bar{\mathcal{B}})=0.
$$
Therefore we have that
\begin{equation}
\begin{split}
& P(\lim_{k\rightarrow\infty}{\frac{m}{k}}< \frac{\ln(\frac{1}{1-\alpha})}{\ln(\mu)}|\bar{\mathcal{B}})P(\bar{\mathcal{B}}) =0.
\end{split}
\end{equation}
Thus from \eqref{eq:sum} we require 
\begin{equation}\label{eq:mode dependent thm 1}
\begin{split}
 & P(\lim_{k\rightarrow\infty}{\frac{m}{k}}< \frac{\ln(\frac{1}{1-\alpha})}{\ln(\mu)}|\mathcal{B})P(\mathcal{B})
 \\&= P(\lim_{k\rightarrow\infty}{\frac{m}{k}}< \frac{\ln(\frac{1}{1-\alpha})}{\ln(\mu)}|\lim_{k\rightarrow\infty}\frac{m}{k}=P_{jump})\\
 &=  P[P_{jump}< \frac{\ln(\frac{1}{1-\alpha})}{\ln(\mu)}]=1.
\end{split}
\end{equation}
Since given a policy $\pi$, $P_{jump}$ is a constant, equivalently we require that
$$
P_{jump}< \frac{\ln(\frac{1}{1-\alpha})}{\ln(\mu)}. 
$$
\end{proof}

\subsection{Stability with Mode-Dependent Conditions}
From Theorem \ref{thm:main}, we find a sufficient condition for the policy of the MDP to satisfy, such that the stability can be guaranteed. However, the conditions \eqref{eq:same mode decrease} and \eqref{eq: mode jump decrease} in Theorem \ref{thm:main} are mode-independent, which may introduce conservativeness because the same pair of $\alpha$ and $\mu$ has to be satisfied by all the modes. Therefore, inspired by \cite{zhao2011stability} we introduce the following theorem where the parameters such as $\alpha$ and $\mu$ in \eqref{eq:same mode decrease} and \eqref{eq: mode jump decrease} are mode-dependent.
\begin{thm}\label{thm:main2}
Consider an MDP-JLS  whose mode $s\in S$ makes transitions following a MDP $\mathcal{M}=(S,\hat{s},\Sigma,T)$ and the dynamics in each modes as in \eqref{eq:switched systems} with given constants $0<\alpha_s<1$ and $\mu_s>1$ for all $s\in S$. Assume there exists a Lyapunov function candidate $V(x)=\{V_s(x),s\in S\}$ that satisfies the following for any pair of $s,s'\in S$. 
\begin{align}
   & V_{s}(x_{k+1})-V_{s}(x_{k})\leq -\alpha_s V_{s}(x_{k}), \text{ and }\label{eq:same mode decrease 1}\\
   & V_{s}(x_{k})\leq\mu_s V_{s'}(x_{k}), \label{eq: mode jump decrease 1}
\end{align}
where $x_{k}:=x(k)$ for simplicity. Given a policy $\sigma$ and if the induced DTMC has a unique stationary distribution, then the system is stable with probability one if
\begin{equation}\label{eq:mode dependent stability condition}
    \sum_{s} \overrightarrow{p_s}\ln{\mu_s}+p^\infty_s\ln(1-\alpha_s)< 0.
\end{equation}
\end{thm}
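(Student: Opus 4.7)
The plan is to mimic the proof of Theorem~\ref{thm:main}, but to keep a separate accounting of the time spent in each mode and the jumps into each mode so that the mode-dependent constants $\alpha_s,\mu_s$ can be used. Fix a trajectory of length $k$, let $0<k_1<\dots<k_m\le k$ be its jump instants, let $n_s$ be the number of time steps spent in mode $s$ during $[0,k]$ (so $\sum_s n_s=k$), and let $m_s$ be the number of jumps whose destination is mode $s$ (so $\sum_s m_s=m$). On each maximal stretch of constant mode, (\ref{eq:same mode decrease 1}) contributes a factor $(1-\alpha_s)$ per step, while at each jump landing in mode $s$, (\ref{eq: mode jump decrease 1}) contributes a factor $\mu_s$. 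Iterating backwards from time $k$ exactly as in the proof of Theorem~\ref{thm:main} yields
$$V_{s_k}(x_k)\le \Bigl(\prod_{s\in S}(1-\alpha_s)^{n_s}\mu_s^{m_s}\Bigr)V_{\hat s}(x_0).$$

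Taking logarithms, dividing by $k$, and passing to the limit then reduces the problem to identifying the almost-sure limits of the empirical frequencies $n_s/k$ and $m_s/k$. Since the induced DTMC is at its stationary distribution, the standard ergodic theorem applied to the indicator $\mathbf{1}\{s_t=s\}$ gives $n_s/k\to p_s^\infty$ almost surely, and applied to the pair-indicator $\mathbf{1}\{s_{t-1}\neq s,\,s_t=s\}$ gives $m_s/k\to \sum_{s'\neq s}p_{s'}^\infty P_{s's}=\overrightarrow{p_s}$ almost surely (this is the mode-resolved analogue of the law-of-large-numbers step $m/k\to P_{\textit{jump}}$ used in the proof of Theorem~\ref{thm:main}). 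Combining these with the bound above,
$$\limsup_{k\to\infty}\frac{1}{k}\ln V_{s_k}(x_k)\le \sum_{s}\bigl(p_s^\infty\ln(1-\alpha_s)+\overrightarrow{p_s}\ln\mu_s\bigr)\le 0$$
by hypothesis~(\ref{eq:mode dependent stability condition}). Because every $V_s$ is positive definite, this forces $\|x_k\|\to 0$ with probability one, giving the stated conclusion.

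The step I expect to be the main obstacle is the same boundary issue already implicit in Theorem~\ref{thm:main}: when (\ref{eq:mode dependent stability condition}) holds with equality, a bound of the form $\limsup k^{-1}\ln V\le 0$ does not by itself force $V\to 0$. To close this gap one can either strengthen the hypothesis to a strict inequality or invoke a second-order refinement of the ergodic theorem (a central limit or iterated-logarithm bound) on the partial sums $\sum_s n_s\ln(1-\alpha_s)+\sum_s m_s\ln\mu_s$. The remaining routine details---combining finitely many almost-sure limits and checking that the stationarity assumption is used only through the ergodic theorem---follow the same pattern as in the proof of Theorem~\ref{thm:main}.
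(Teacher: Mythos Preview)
Your proposal is correct and follows essentially the same argument as the paper's own proof: the paper likewise derives the bound $V_{s_k}(x_k)\le V_{\hat s}(x_0)\prod_{s}\mu_s^{m_s}(1-\alpha_s)^{k_s}$ (with $k_s$ in place of your $n_s$) and then invokes the law of large numbers to identify $m_s/k\to\overrightarrow{p_s}$ and $k_s/k\to p_s^\infty$ almost surely. Your remark about the boundary case where \eqref{eq:mode dependent stability condition} holds with equality is well taken---the paper's proof has the same gap and does not address it.
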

\begin{proof}
We start the proof similar to that of Theorem \ref{thm:main}. For a time horizon $k$, suppose there are $m$ mode jumps so far at time instants $k_i,i\in\{1,...,m\}$ such that $s_{k_i}\neq s_{k_i-1}$, i.e., $k_i$ is the time instant that the $i$-th jump just occurred. Then we  have
\begin{equation}
    V_{s_{k}}(x_k)\leq (1-\alpha_{s_{k_m}})^{k-k_{m}}V_{s_{k_{m}}}(x_{k_{m}}),
\end{equation}
from equation (\ref{eq:same mode decrease 1}) since the mode remains the same during the time interval $[k_{m},k]$, i.e. $s_t = s_k$ for all $t\in[k_m,k]$. Applying equations \eqref{eq:same mode decrease 1} and (\ref{eq: mode jump decrease 1}) recursively, we have
\begin{equation}\label{eq:mode dependent}
    \begin{split}
       &V_{s_{k}}(x_k)\leq  (1-\alpha_{s_{k_m}})^{k-k_{m}}\mu_{s_{k_m}} V_{s_{k_{m}-1}}(x_{k_{m}})\\
       &\leq(1-\alpha_{s_k})^{k-k_{m}}\mu_{s_k}\\
       & ~~~~~~~~~~~~~\times (1-\alpha_{s_{k_m-1}})^{k_m-k_{m-1}}\mu_{s_{k_m-1}}V_{s_{k_{m-1}-1}}(x_{k_{m-1}})
       \\&\leq \dots \leq V_{\hat{s}}(x_0)\prod_{i=0}^m (1-\alpha_{s_{k_i}})^{k_{i+1}-k_i}\mu_{s_{k_i}},
    \end{split}
\end{equation}
where $k_{m+1}=k,k_0=0$ and $\mu_{s_{k_0}}=1$. Among the $m$ number of jumps, we denote $m_s$ as the number of jumps to mode $s$ from a different mode. It is immediate that $\sum_s m_s=m$. Furthermore, we denote the time that the system spends in mode $s$ up to $k$ as $k_s$. By definition, $\sum_s k_s=k$. Then from \eqref{eq:mode dependent} we have
\begin{equation}\label{eq:mode dependent 1}
    \begin{split}
    V_{s_{k}}(x_k)&\leq V_{\hat{s}}(x_0)\prod_{i=0}^m (1-\alpha_{s_{k_i}})^{k_{i+1}-k_i}\mu_{s_{k_i}}\\
    &=V_{\hat{s}}(x_0)\prod_{s=1}^N \mu_s^{m_s}(1-\alpha_s)^{k_s}\\
    &=V_{\hat{s}}(x_0)(\prod_{s=1}^N \mu_s^{\frac{m_s}{k}}(1-\alpha_s)^{\frac{k_s}{k}})^k.
    \end{split}
\end{equation}
Similar to the proof of Theorem \ref{thm:main}, to prove the system is stable with probability one, we need to show that 
$$
P[\lim_{k\rightarrow\infty}(\prod_{s=1}^N \mu_s^{\frac{m_s}{k}}(1-\alpha_s)^{\frac{k_s}{k}})^k=0]=1,
$$
which implies that 
\begin{equation}\label{eq:mode dependent 2}
    \begin{split}
    &P[\lim_{k\rightarrow\infty}\prod_{s=1}^N \mu_s^{\frac{m_s}{k}}(1-\alpha_s)^{\frac{k_s}{k}}< 1]=1\\
        \Longleftrightarrow & P[\lim_{k\rightarrow\infty}{\sum_s \frac{m_s}{k}\ln(\mu_s)+\frac{k_s}{k}\ln(1-\alpha_s)}< 0]=1.
    \end{split}
\end{equation}
According to the law of large numbers \cite{bremaud2013markov}, we know that   
$$
P[\lim_{k\rightarrow\infty}\frac{m_s}{k}=\overrightarrow{p_s}]=1, \text{ and }
$$
$$
P[\lim_{k\rightarrow\infty}\frac{k_s}{k}=p^\infty_s]=1.
$$
Therefore, from \eqref{eq:mode dependent 2} and similar derivations that reach \eqref{eq:mode dependent thm 1}, we prove that if \eqref{eq:mode dependent stability condition} holds, the system is stable with probability one. 
\end{proof}
It is worth noticing that if $\mu_s=\mu$ and $\alpha_s=\alpha$ for any mode $s$, then condition \eqref{eq:mode dependent thm 1} becomes
\begin{equation}\label{eq:mode dependent 3}
       \sum_{s} \overrightarrow{p_s}\ln{\mu}+p^\infty_s\ln(1-\alpha)< 0.
\end{equation}
Since $\sum_s\overrightarrow{p_s}=P_{jump}$ and $\sum_sp^\infty_s=1$, equation \eqref{eq:mode dependent 3} reduces to 
$$
P_{jump}\ln{\mu}+\ln(1-\alpha)< 0.
$$
Therefore, we recover the stability condition from Theorem \ref{thm:main}.

\subsection{Computation of $\alpha$ and $\mu$}
The stability conditions in Theorem \ref{thm:main} and Theorem \ref{thm:main2} hinge  on the existence of multiple Lyapunov functions for each mode $s$ as well as   constants $\alpha$ and $\mu$ or $\alpha_s$ and $\mu_s$. This subsection gives an algorithm to find these Lyapunov functions and constants.

The Lyapunov function in mode $s$ can be formed as follows:  
	\begin{align}	
	V_{s}(x)=x'M_{s}x,
	\end{align}
where $M_{s}$ is a positive definite matrix.

We have 
	\begin{align}	
	V_{s}(x_{k+1})&=x_{k+1}'M_{s}x_{k+1}\nonumber\\&\le (1-\alpha_s)V_{s}(x_{k})=(1-\alpha_s)x_{k}'M_{s}x_{k}, 
	\end{align}
where $x_{k}:=x(k)$ for simplicity.
	
Therefore, $M_{s}$ and the largest $\alpha_{s}$ can be computed from the following bilinear optimization problem: 
	\begin{align}	
	\begin{split}
	& \max\limits_{\alpha_{s}, M_{s}} ~~~~\alpha_{s} \\
	& \textrm{s.t.} ~~~~~M_{s}'=M_{s}\succeq I,\\
	& ~~~~~~~~~A_{s}'M_{s}A_{s}\preceq (1-\alpha_{s})M_{s},
	\end{split}
	\label{LMI}
	\end{align}
where $I$ is the identity matrix.

When the state jump from mode $s$ to mode $s'$ ($s'\neq s$), we have
	\begin{align}	
	V_{s}(x_{k})=x_{k}'M_{s}x_{k}\le \mu_{s, s'} V_{s'}(x_{s})=\mu_{s, s'} x_{k}'M_{s'}x_{k}. 
	\end{align}
After $M_{s}$ is computed for every mode $s$ using (\ref{LMI}), the smallest $\mu_{s, s'}$ can be computed from the following SDP problem:
	\begin{align}	
	\begin{split}
	& \min\limits_{\mu_{s, s'}} ~~~~\mu_{s, s'} \\
	& \textrm{s.t.} ~~~~~\mu_{s, s'}>1\\
	&~~~~~~~~~M_{s}\preceq \mu_{s, s'} M_{s'}.
	\end{split}
	\label{LMI2}
	\end{align}

\subsection{Policy Synthesis for Stability With Probability One}
This subsection introduces computation approach to find a policy such that an MDP-JLS is guaranteed to be stable with probability one, based on stability conditions in Theorem \ref{thm:main} and Theorem \ref{thm:main2}. In addition to stability, we assign a cost function $c:S\rightarrow\mathbb{R}$ to each mode, such that the stabilizing policy should also minimize a  cost $\sum_s c(s) p_s^\infty $, which represents an average cost over states when the underlying DTMC reaches its stationary distribution. We start with policy synthesis following stability conditions stated in Theorem \ref{thm:main}. 

\begin{thm}\label{thm:modeIndLp}
In case of perfect model knowledge of the MDP, given stability constants $0<\alpha<1$, $\mu>1$ and given small $\epsilon > 0$, Problem~\ref{problem:stability with probability one} can be formulated as the linear optimization problem

\begin{align}\label{eq:plpformulation}
& \underset{\hat{\pi},p^\infty,\hat{P}}{\text{minimize}}
& & \sum_s c(s) p_s^\infty \\
& \text{subject to }
& &  \hat{\pi} \mathbf{1} = p^\infty, \quad \mathbf{1}' p^\infty = 1,  \label{stochast_constr} \\
& & & \hat{P}' \mathbf{1} = p^\infty, \label{stat_dist_constr}\\
& & & \hat{P}_{ij} = \sum_\sigma T(i , \sigma , j) \hat{\pi}(i,\sigma) \quad \forall i,j \in S,\label{mdp_impl_constr}\\
& & & 1 - Tr ( \Hat{P}) < \frac{\ln(\frac{1}{1-\alpha})}{\ln(\mu)}, \label{prob_one_constr} \\
& & & \hat{\pi} \geq 0 , \quad p^\infty \geq \epsilon \label{pos_constr_pi_p}, 
\end{align}
where $\hat{P} \in \mathbb{R}^{|S| \times |S|}$, $p^\infty \in \mathbb{R}^{|S|}$ and $\hat{\pi} \in \mathbb{R}^{|S| \times |\Sigma|}$. If we denote by $\hat{\pi}_{opt},p_{opt}^\infty,\hat{P}_{opt}$ the optimal solutions, then the policy $\pi$, the induced Markov matrix $P$ and the stationary distribution $p^\infty$ solutions of problem~\ref{problem:stability with probability one} are given by 
\[ \pi = diag(p_{opt}^\infty)^{-1} \hat{\pi}_{opt}, \]
\[ P = diag(p_{opt}^\infty)^{-1} \hat{P}_{opt}, \text{ and }\]
\[ p^\infty = p_{opt}^\infty. \]
\end{thm}


\begin{proof}
This LP formulation comes immediately from the change of variables
\begin{align*}
    \hat{P} &= diag(p^\infty) P, \text{ and } \\
    \hat{\pi} &= diag(p^\infty) \pi.
\end{align*}
Observe that $\forall i,j \in S, \: \hat{P}_{ij} = p_i^\infty P_{ij}$ and $ \forall \sigma \in \Sigma, \: \hat{\pi}_{ik} = p_i^\infty \pi_{i\sigma}$.

By definition $p^\infty > 0$ and $\pi \geq 0$ is equivalent to $\hat{\pi} \geq 0$ and $\quad p^\infty \geq \epsilon$ for a fixed small $\epsilon > 0$. This gives constraint~\eqref{pos_constr_pi_p}.

$\pi \mathbf{1} = 1 \Longleftrightarrow \hat{\pi} \mathbf{1} = p^\infty$ by left multiplication with invertible matrix $diag(p^\infty)$. Since $p^\infty$ is a probability distribution,  $\mathbf{1}' p^\infty = 1$. This proves constraint~\eqref{stochast_constr}.

By definition of the stationary distribution $p^\infty$,  we have
\[ 
    \sum_i p_i^\infty P_{ij} = \sum_i \hat{P}_{ij} = p_j^\infty \text{ for all } j \in S.
\]
Therefore $(\pi^\infty)' P = (\pi^\infty)' \Longleftrightarrow \mathbf{1}' \hat{P} = (\pi^\infty)'$. This gives constraint~\eqref{stat_dist_constr}.

Since $p^\infty > 0$, the constraint~\eqref{mdp_impl_constr} is given by

\begin{equation*}
\begin{aligned}
 &P_{ij} = \sum_\sigma T(i,\sigma,j) \pi(i,\sigma) \\
   \Longleftrightarrow &p_i^\infty P_{ij} = \sum_\sigma T(i,\sigma,j) (p_i^\infty \pi(i,\sigma)) \\
   \Longleftrightarrow &\hat{P}_{ij} = \sum_\sigma T(i,\sigma,j) \hat{\pi}(i,\sigma),
\end{aligned}
\end{equation*}
for all $i,j \in S$. 

Finally, constraint~\eqref{prob_one_constr} is given by Theorem \ref{thm:main} 
\[
P_{jump} = 1 - \sum_ip_i^\infty P_{ii} = 1 - Tr ( \Hat{P}) < \frac{\ln(\frac{1}{1-\alpha})}{\ln(\mu)}.
\]
\end{proof}

Given the mode-dependent stability coefficients $\alpha_s$ and $\mu_s$, Theorem \ref{thm:main2} can be formulated as another LP where a feasible solution contains a policy that guarantees stability of the system with probability one.

\begin{thm}\label{thm:modeDepLp}
Consider an MDP-JLS  whose mode $s\in S$ makes transitions following a MDP $\mathcal{M}=(S,\hat{s},\Sigma,T)$ and the dynamics in each modes as in \eqref{eq:switched systems} with given mode-dependent constants $0<\alpha_s<1$ and $\mu_s>1$ for all $s\in S$. Given small $\epsilon > 0$,  Problem~\ref{problem:stability with probability one} can be formulated as the linear optimization problem

\begin{align} \label{eq:mlpformulation}
& \underset{\hat{\pi},\pi^\infty,\hat{P}}{\text{minimize}}
& & \sum_s c(s) p_s^\infty \\
& \text{s.t }
& &  \hat{\pi} \mathbf{1} = p^\infty, \quad \mathbf{1}' p^\infty = 1, \label{eq:mlpformulation:1} \\
& & & \hat{P}' \mathbf{1} = p^\infty, \\
& & & \hat{P}_{ij} = \sum_\sigma T(i , \sigma , j) \hat{\pi}(i,\sigma) \quad \forall i,j \in S,\\
& & & \sum_i (\sum_{j \neq i} \hat{P}_{ji}) ln(1-\mu_i) + p_i^\infty ln(1 -\alpha_i) < 0, \label{mode_dep_prob_one_constr}\\
& & & \hat{\pi} \geq 0 , \quad p^\infty \geq \epsilon \label{eq:mlpformulation:2} , 
\end{align}
where $c : S \rightarrow \mathbb{R}$ is an optimization criteria over the set of solutions, $\hat{P} \in \mathbb{R}^{|S| \times |S|}$, $p^\infty \in \mathbb{R}^{|S|}$ and $\hat{\pi} \in \mathbb{R}^{|S| \times |\Sigma|}$. If $\hat{\pi}_{opt},p_{opt}^\infty,\hat{P}_{opt}$ are optimal solutions, then the policy $\pi$, the induced Markov matrix $P$ and the stationary distribution $p^\infty$ solutions of problem~\ref{problem:stability with probability one} are given by 
\[ \pi = diag(p_{opt}^\infty)^{-1} \hat{\pi}_{opt}, \]
\[ P = diag(p_{opt}^\infty)^{-1} \hat{P}_{opt}, \text{ and } \]
\[ p^\infty = p_{opt}^\infty. \]
\end{thm}

\begin{proof}
This is an immediate consequence of Theorem \ref{thm:main2}. Observe that $\forall i,j \in S, \: \hat{P}_{ij} = p_i^\infty P_{ij}$. Therefore,
$\overrightarrow{p_s} =  \sum_{s'\neq s}P(s',s)p^\infty_{s'} = \sum_{s'\neq s} \hat{P}(s',s)$. The constraint~\eqref{mode_dep_prob_one_constr} is immediately obtained from the last observation and Theorem \ref{thm:main2}. 

The others inequalities constraints can be derived using the same proof sketch as in Theorem \ref{thm:modeIndLp}.

\end{proof}

If the state-based cost is not of interest, then a feasibility program with the constraints from \eqref{stochast_constr} to  \eqref{pos_constr_pi_p} or from  \eqref{eq:mlpformulation:1} to  \eqref{eq:mlpformulation:2} can be solved instead.

\subsection{Stability Guarantee With Imperfect Model Knowledge}
In many cases, the true MDP $\mathcal{M}$ that governs the switch dynamics may not be precisely known but can only be estimated through statistical experiments. As a result, it may only be possible to obtain an approximated model $\bar{\mathcal{M}}$ such that the transition probabilities in  ${\mathcal{M}}$ is known to lie in some neighborhood of that in $\bar{\mathcal{M}}$. To make such notion of approximation more precise, we first define $\Delta-$approximation in MDPs \cite{brafman2002r}.
\begin{definition}\label{dfn:alphaMDP}
Let $\mathcal{M}=(S,\hat{s},\Sigma,T)$ and $\bar{\mathcal{M}}=(\bar{S},\bar{\hat{s}},\bar{\Sigma},\bar{T})$ be two MDPs. $\bar{\mathcal{M}}$ is a $\Delta-$approximation of $\mathcal{M}$ if
\begin{itemize}
	\item $S=\bar{S}$, $\hat{s}=\bar{\hat{s}}$, $\Sigma=\bar{\Sigma}$. That is, they share the same state space, initial condition and action space;
	\item $|T(s,\sigma,s')-\bar{T}(s,\sigma,s')|\leq\Delta$ for any $s,s'$ and $\sigma$;
	\item $T(s,\sigma,s')>0$ if and only if $\bar{T}(s,\sigma,s')>0$ for any $s,s'$ and $\sigma$.
\end{itemize}
\end{definition}
By definition, it is not hard to see that if $\bar{\mathcal{M}}$ is a $\Delta-$approximation of $\mathcal{M}$, then with the same policy $\pi$, we have that $\mathcal{C}$, the induced Markov chain from  $\mathcal{M}$ is a $\Delta-$approximation of  $\bar{\mathcal{C}}$, the induced Markov chain from $\bar{\mathcal{M}}$.

When the transition probabilities of two DTMCs are close to each other, their stationary distribution is also close as shown the following theorem. 
\begin{thm}\cite{funderlic1986sensitivity}\label{thm:stationary distribution}
Let $\bar{\mathcal{C}}$ and $\mathcal{C}$ be two DTMCs both with $N$ states, and transition matrices are $\bar{P}$ and $P=\bar{P}-F$, respectively.  Then the stationary distribution $p^\infty$ and $\bar{p}^\infty$ satisfy
\begin{equation}\label{equation:stationary distribution bound}
    |p_i^\infty-\bar{p}_i^\infty|\leq ||F||_\infty\max_{j}|h_{ij}^\#|,\text{ for each } i\in\{1,...,N\}
\end{equation}
where $N=|S|,p_i^\infty:=p(s_i)^\infty$, $h_{ij}^\#$ is the $(i,j)$ entry of a matrix $H^\#$ which is the group inverse of $H=I-\bar{P}$.
\end{thm}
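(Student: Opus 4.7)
The plan is to exploit the defining identities $p^\infty P = p^\infty$ and $\bar{p}^\infty \bar{P} = \bar{p}^\infty$, convert them into a single perturbation identity involving the group inverse $H^\#$, and then pass to entrywise bounds via the triangle inequality. First, I would set $\delta = p^\infty - \bar{p}^\infty$ and note that because $p^\infty$ and $\bar{p}^\infty$ are both probability vectors on the same state space, $\delta \mathbf{1} = 0$. Writing $I - P = H + F$ (since $P = \bar{P} - F$ and $H = I - \bar{P}$), the relation $p^\infty(I-P) = 0$ immediately gives
\begin{equation*}
p^\infty H = -\,p^\infty F.
\end{equation*}
Since $\bar{p}^\infty H = 0$, subtracting yields $\delta H = -\,p^\infty F$, which is the perturbation identity I want.

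Next, I would right-multiply this identity by the group inverse $H^\#$. The key structural fact I need is the standard identity for the group inverse of $H = I - \bar{P}$ in the irreducible case,
\begin{equation*}
H H^\# = I - \mathbf{1}\,\bar{p}^\infty,
\end{equation*}
which follows from three observations: $\ker H = \mathrm{span}(\mathbf{1})$, $\mathrm{range}(H) = \{v : \bar{p}^\infty v = 0\}$, and $HH^\#$ is the oblique projector onto $\mathrm{range}(H)$ along $\ker H$. Combining with $\delta \mathbf{1} = 0$ gives $\delta(I - H H^\#) = \delta \mathbf{1} \bar{p}^\infty = 0$, so
\begin{equation*}
\delta = \delta H H^\# = -\,p^\infty F H^\#.
\end{equation*}

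Finally, I would expand entrywise: $\delta_i = -\sum_{k,j} p^\infty_k F_{kj} h^\#_{ji}$, and bound
\begin{equation*}
|\delta_i| \;\le\; \Bigl(\max_{j,i}|h^\#_{ji}|\Bigr)\sum_k p^\infty_k \sum_j |F_{kj}| \;\le\; \|F\|_\infty \max_{i,j}|h^\#_{ij}|,
\end{equation*}
using $\sum_j |F_{kj}| \le \|F\|_\infty$ and $\sum_k p^\infty_k = 1$. This yields the claimed bound.

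The routine step is the final chain of inequalities; the real obstacle is justifying the identity $I - HH^\# = \mathbf{1}\bar{p}^\infty$, because it hinges on the group-inverse characterization that also requires $\bar{P}$ to be irreducible (so that the stationary distribution is unique and $\mathrm{dim}\,\ker H = 1$). I would state this as a standing assumption (it is consistent with the uniqueness of $p^\infty$ and $\bar{p}^\infty$ assumed earlier in Section~5) and either cite it from the Markov chain literature or verify it directly by checking $HH^\#H = H$, $H^\#HH^\# = H^\#$, and $HH^\# = H^\#H$ for the candidate projector.
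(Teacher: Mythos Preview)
Your argument is correct and is essentially the classical derivation due to Meyer and Funderlic: from $p^\infty(I-P)=0$ and $\bar p^\infty(I-\bar P)=0$ one obtains $(p^\infty-\bar p^\infty)H=-p^\infty F$, and then the group-inverse identity $I-HH^\#=\mathbf{1}\bar p^\infty$ together with $(p^\infty-\bar p^\infty)\mathbf{1}=0$ yields $p^\infty-\bar p^\infty=-p^\infty F H^\#$, from which the entrywise bound follows exactly as you wrote. Note, however, that the paper does not supply its own proof of this theorem; it is quoted verbatim from \cite{funderlic1986sensitivity}, so there is nothing in the paper to compare against beyond observing that your proof matches the standard one in that reference.
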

The group inverse $H^\#$ of a matrix $H$, if it exists, can be uniquely determined by three equations $HH^\#H=H,H^\#HH^\#=H^\#$, and $HH^\#=H^\#H$ \cite{funderlic1986sensitivity}. For a transition matrix $P$, $H^\#$ always exists and can be computed as the following \cite{meyer1975role}. We can write $H$ as
$$
H = \begin{bmatrix} 
U & c \\
d' & \alpha 
\end{bmatrix},
$$
where $U\in\mathbb{R}^{(N-1)\times(N-1)}$, and
$$
h' = d'U^{-1}, \delta = -h'U^{-1}1_{N-1},\beta = 1-h'1_{N-1},G=U^{-1}-\frac{\delta}{\beta}I.
$$
Here, $\delta$ and $\beta$ are nonzero scalars. Then $H^\#$ can be calculated by
\begin{equation}\label{eq:H inverse}
H^{\#} = \begin{bmatrix}
U^{-1}+\frac{U^{-1}1_{N-1}h'U^{-1}}{\delta}-\frac{G1_{N-1}h'G}{\delta} & -\frac{G1_{N-1}}{\beta}\\
\frac{h'G}{\beta} & \frac{\delta}{\beta^2}
\end{bmatrix},
\end{equation}
The following lemma is a direct application of Theorem \ref{thm:stationary distribution} and the definition of $\Delta-$approximation.
\begin{lemma}\label{lemma:stationary distribution bound}
Given two DTMCs $\mathcal{C}$ and $\bar{\mathcal{C}}$ with $N$ states, transition matrices $P$ and $\bar{P}$, respectively. If $\bar{\mathcal{C}}$ is a $\Delta-$approximation of $\mathcal{C}$, then their stationary distribution $p^\infty$ and $\bar{p}^\infty$ satisfy
\begin{equation}
 |p_i^\infty-\bar{p}_i^\infty|\leq N\Delta\max_j|h_{ij}^\#|,   
\end{equation}
\end{lemma}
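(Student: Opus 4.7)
The plan is to apply Theorem \ref{thm:stationary distribution} directly after bounding $\|F\|_\infty$ in terms of $\Delta$ and $N$. So the only real work is to translate the entrywise approximation guarantee into an induced $\infty$-norm bound on $F = \bar{P} - P$.

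First I would unwind the definition of $\Delta$-approximation at the level of DTMCs: the paper has already noted that if $\bar{\mathcal{M}}$ is a $\Delta$-approximation of $\mathcal{M}$ in the sense of Definition \ref{dfn:alphaMDP}, then under a common policy $\pi$ the induced chains $\mathcal{C}$ and $\bar{\mathcal{C}}$ satisfy the analogous entrywise bound on their transition matrices. Concretely, for each pair $s_i, s_j$,
\begin{equation*}
|P_{s_i s_j} - \bar{P}_{s_i s_j}| = \Bigl| \sum_{\sigma \in \Sigma} \pi(s_i,\sigma)\bigl(T(s_i,\sigma,s_j) - \bar{T}(s_i,\sigma,s_j)\bigr) \Bigr| \leq \sum_{\sigma \in \Sigma} \pi(s_i,\sigma)\,\Delta = \Delta,
\end{equation*}
since $\sum_\sigma \pi(s_i,\sigma) = 1$. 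Hence every entry of $F = \bar{P} - P$ has absolute value at most $\Delta$.

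Next I would convert this entrywise bound into the $\infty$-norm used by Theorem \ref{thm:stationary distribution}. By the definition $\|F\|_\infty = \max_i \sum_j |F_{ij}|$, and each of the $N$ summands is bounded by $\Delta$, we obtain $\|F\|_\infty \leq N\Delta$.

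Finally, plugging this bound into the conclusion \eqref{equation:stationary distribution bound} of Theorem \ref{thm:stationary distribution} yields, for each $i$,
\begin{equation*}
|p_i^\infty - \bar{p}_i^\infty| \leq \|F\|_\infty \max_{j} |h_{ij}^\#| \leq N\Delta \max_{j} |h_{ij}^\#|,
\end{equation*}
which is exactly the claim. There is essentially no obstacle here beyond the bookkeeping step of turning the per-action bound from Definition \ref{dfn:alphaMDP} into a per-transition bound on $P$ via the policy; the rest is a direct invocation of Theorem \ref{thm:stationary distribution} and the trivial inequality $\|F\|_\infty \leq N \max_{i,j}|F_{ij}|$.
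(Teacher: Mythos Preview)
Your proposal is correct and is exactly the argument the paper has in mind: the text explicitly says the lemma is ``a direct application of Theorem~\ref{thm:stationary distribution} and the definition of $\Delta$-approximation,'' and your only nontrivial step, bounding $\|F\|_\infty \le N\Delta$ from the entrywise bound $|F_{ij}|\le\Delta$, is precisely that direct application. One small caveat: as stated in the paper, Theorem~\ref{thm:stationary distribution} has $\max_{i,j}|h_{ij}^\#|$ rather than the per-$i$ quantity $\max_j|h_{ij}^\#|$ appearing in the lemma, so when you invoke \eqref{equation:stationary distribution bound} you should either use $\max_{i,j}$ (which still implies the lemma, since $\max_j|h_{ij}^\#|\le\max_{i,j}|h_{ij}^\#|$ would go the wrong way---so in fact the lemma as written is slightly stronger than what Theorem~\ref{thm:stationary distribution} literally gives and relies on the sharper per-row form of the Funderlic--Meyer bound) or note that the cited result in \cite{funderlic1986sensitivity} actually yields the per-$i$ version.
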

Suppose we have a $\Delta-$approximation $\bar{\mathcal{M}}$ of the underlying MDP ${\mathcal{M}}$. Since it is not possible to know ${\mathcal{M}}$  precisely, we can only find some policy $\pi$ based on the estimated model $\bar{\mathcal{M}}$ and apply it to the true MDP $\mathcal{M}$. As illustrated in Theorem \ref{thm:main}, to be able to guarantee the stability of the system, we have to bound the difference between $P_{jump}$ and $\bar{P}_{jump}$, which are probabilities of mode jumps in the ${\mathcal{M}}$ and $\bar{\mathcal{M}}$, respectively.

Recall that 
$
P_{jump} = 1 - \sum_ip_i^\infty P_{ii}\text{ and } \bar{P}_{jump} = 1 - \sum_i\bar{p}^\infty_i\bar{P}_{ii}.
$
Let $P_{ii}=\bar{P}_{ii}+\delta_i$ and  $p_i^\infty=\bar{p}_i^\infty+\epsilon_i$. Once we have the estimated model $\bar{\mathcal{M}}$ which is a $\Delta-$approximation of $\mathcal{M}$, given any policy $\pi$, it is possible to obtain the transition matrix $\bar{P}$ and the stationary distribution $\bar{p}^\infty$. Furthermore, by the definition of $\Delta-$approximation and Lemma \ref{lemma:stationary distribution bound}, we know that 
\begin{equation}
    \begin{split}
        -\Delta\leq&\delta_i\leq\Delta, \text{ and}\\
        -N\Delta\max_j|h_{ij}^\#|\leq&\epsilon_i\leq N\Delta\max_j|h_{ij}^\#|.
    \end{split}
\end{equation}

Then we  have 
\begin{equation}\label{eq:P_jump}
\begin{split}
&{P}_{jump}-\bar{P}_{jump} = \sum_ip_i^\infty P_{ii}-\bar{p}_i^\infty\bar{P}_{ii}\\
                        &= \sum_i(\bar{p}_i^\infty+\epsilon_i)(\bar{P}_{ii}+\delta_i)-\bar{p}_i^\infty \bar{P}_{ii}\\
                        &= \sum_i \bar{p}_i^\infty\delta_i+\bar{P}_{ii}\epsilon_i+\epsilon_i\delta_i\\
                        &\leq \sum_i \bar{p}_i^\infty\Delta+\bar{P}_{ii}N\Delta\max_j|h_{ij}^\#|+N\Delta^2\max_j|h_{ij}^\#|.
\end{split}
\end{equation}
Therefore, we have the following theorem to guarantee the policy we find from the estimated model $\bar{\mathcal{M}}$ is able to stabilize the switched system whose switching is governed by the true model $\mathcal{M}$.
\begin{thm}\label{thm:approximation}
Let $\mathcal{M}=(S,\hat{s},\Sigma,T)$ and $\bar{\mathcal{M}}=(S,{\hat{s}},{\Sigma},\bar{T})$ be two MDPs where $\bar{\mathcal{M}}$ is a $\Delta-$approximation of $\mathcal{M}$. A policy $\pi$ incurs two DTMCs $\mathcal{C}=(S,\hat{s},P)$ and $\bar{\mathcal{C}}=(S,\hat{s},\bar{P})$, respectively. Then the system is stable with probability one with the policy $\pi$ if conditions \eqref{eq:same mode decrease} and \eqref{eq: mode jump decrease} are satisfied and
\begin{equation}
\begin{split}
   \bar{P}_{jump}+ \sum_i \bar{p}_i^\infty\Delta+\bar{P}_{ii}N\Delta\max_j|h_{ij}^\#|+& N\Delta^2\max_j|h_{ij}^\#|\\
   &~~~~~~~< \frac{\ln(\frac{1}{1-\alpha})}{\ln(\mu)}.     
\end{split}
\end{equation}
\end{thm}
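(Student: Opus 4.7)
The plan is to reduce this theorem to Theorem \ref{thm:main} applied to the true MDP $\mathcal{M}$. Since conditions \eqref{eq:same mode decrease} and \eqref{eq: mode jump decrease} are already assumed and depend only on the switched dynamics (not on the MDP structure), the only remaining hypothesis of Theorem \ref{thm:main} that must be verified is
$$ P_{jump}\leq \frac{\ln(\tfrac{1}{1-\alpha})}{\ln(\mu)}, $$
where $P_{jump}$ is the jump probability under the \emph{true} stationary distribution of $\mathcal{C}$. So the whole task is to upper bound $P_{jump}$ in terms of quantities computable from $\bar{\mathcal{M}}$ alone.

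The key ingredient is already in place: equation \eqref{eq:P_jump} provides exactly such a bound. The plan is to invoke it by writing $P_{ii}=\bar P_{ii}+\delta_i$ and $p_i^\infty=\bar p_i^\infty+\epsilon_i$, expanding the product $(\bar p_i^\infty+\epsilon_i)(\bar P_{ii}+\delta_i)$, and estimating the cross terms by the worst-case magnitudes $|\delta_i|\leq\Delta$ (direct from Definition \ref{dfn:alphaMDP}) and $|\epsilon_i|\leq N\Delta\max_j|h_{ij}^\#|$ (from Lemma \ref{lemma:stationary distribution bound}, noting that $\mathcal{C}$ is a $\Delta$-approximation of $\bar{\mathcal{C}}$ whenever $\bar{\mathcal{M}}$ is a $\Delta$-approximation of $\mathcal{M}$). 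This yields
$$ P_{jump}\leq \bar P_{jump}+\sum_i\bar p_i^\infty\Delta+\bar P_{ii}N\Delta\max_j|h_{ij}^\#|+N\Delta^2\max_j|h_{ij}^\#|. $$

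Combining this with the hypothesis of the theorem immediately gives $P_{jump}\leq\frac{\ln(1/(1-\alpha))}{\ln(\mu)}$, so Theorem \ref{thm:main} applies to the true system and certifies stability with probability $1$. The main (mild) obstacle is to be careful with signs when estimating the cross terms in the expansion of $(\bar p_i^\infty+\epsilon_i)(\bar P_{ii}+\delta_i)-\bar p_i^\infty\bar P_{ii}$: we need the worst-case upper bound on $P_{jump}-\bar P_{jump}$ rather than on $|P_{jump}-\bar P_{jump}|$, and all three terms ($\bar p_i^\infty\delta_i$, $\bar P_{ii}\epsilon_i$, and the second-order $\epsilon_i\delta_i$) must be bounded in the direction that makes $P_{jump}$ as large as possible. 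Everything else is bookkeeping, and no new Lyapunov analysis beyond Theorem \ref{thm:main} is required.
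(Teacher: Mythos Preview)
Your proposal is correct and follows exactly the same route as the paper: the paper's proof is the single sentence ``This is a direct result of Theorem \ref{thm:main} and equation \eqref{eq:P_jump},'' and your plan is precisely the unpacking of that sentence (use \eqref{eq:P_jump} to bound $P_{jump}$ by the right-hand side of the theorem's hypothesis, then invoke Theorem \ref{thm:main}). Your remark about being careful with signs is well taken, since the intermediate equality in \eqref{eq:P_jump} as written in the paper actually has a sign flip, but the symmetric bounds $|\delta_i|\leq\Delta$ and $|\epsilon_i|\leq N\Delta\max_j|h_{ij}^\#|$ make the final inequality come out the same either way.
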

\begin{proof}
This is a direct result of Theorem \ref{thm:main} and equation \eqref{eq:P_jump}.
\end{proof}
As a result, to synthesize a policy $\pi$ from the MDP $\bar{\mathcal{M}}$, the requirement for $\bar{P}_{jump}$ is changed to 
\begin{equation}
\begin{split}
   \bar{P}_{jump}&< \frac{\ln(\frac{1}{1-\alpha})}{\ln(\mu)}-\sum_i \bar{p}_i^\infty\Delta\\&~~~~+\bar{P}_{ii}N\Delta\max_j|h_{ij}^\#|+N\Delta^2\max_j|h_{ij}^\#|.   
\end{split}
\end{equation}

Similar theorem can also be derived for mode-dependent stability conditions in \eqref{eq:same mode decrease 1},\eqref{eq: mode jump decrease 1} and \eqref{eq:mode dependent stability condition}. First we denote $\overrightarrow{\bar{p_i}}$ as the probability to jump to $s_i$ for the estimated MDP $\mathcal{M}$ with some policy $\pi$, we observe from \eqref{eq:jump to state s} that
\begin{equation}\label{eq:P_arrow}
\begin{split}
\overrightarrow{p_i}-\overrightarrow{\bar{p_i}}&=\sum_{j\neq i}P_{ji}p^\infty_{j}-\bar{P}_{ji}\bar{p}^\infty_{j},\\
&\leq\sum_{j\neq i}(\bar{P}_{ji}+\Delta)(\bar{p}^\infty_{j}+N\Delta\max_j|h_{ij}^\#|)-\bar{P}_{ji}\bar{p}^\infty_{j},\\&=\sum_{j\neq i}\bar{P}_{ji}N\Delta\max_j|h_{ij}^\#|+\bar{P}_{ji}\Delta+N\Delta^2\max_j|h_{ij}^\#|.
\end{split}
\end{equation}
We are now ready to give the following theorem similar to Theorem \ref{thm:approximation} for the mode-dependent parameters case.
\begin{thm}\label{thm:approximation 1}
Let $\mathcal{M}=(S,\hat{s},\Sigma,T)$ and $\bar{\mathcal{M}}=(S,{\hat{s}},{\Sigma},\bar{T})$ be two MDPs where $\bar{\mathcal{M}}$ is a $\Delta-$approximation of $\mathcal{M}$. A policy $\pi$ incurs two DTMCs $\mathcal{C}=(S,\hat{s},P)$ and $\bar{\mathcal{C}}=(S,\hat{s},\bar{P})$, respectively. Consider an MDP-JLS  whose mode $s\in S$ makes transitions following the MDP $\mathcal{M}=(S,\hat{s},\Sigma,T)$ and the dynamics in each modes as in \eqref{eq:switched systems}. The MDP-JLS is stable with probability one with the policy $\pi$ if conditions \eqref{eq:same mode decrease 1} and \eqref{eq: mode jump decrease 1} are satisfied and
\begin{equation}\label{eq:mode dependent stability condition 1}
\begin{split}
    &\sum_{i}[(\overrightarrow{\bar{p_i}}+\sum_{j\neq i}\bar{P}_{ji}N\Delta\max_j|h_{ij}^\#|+\bar{P}_{ji}\Delta+N\Delta^2\max_j|h_{ij}^\#|)\\
    &~~~~~~~~~~~~~~\ln{\mu_i}+(\bar{p}^\infty_i+N\Delta^2\max_j|h_{ij}^\#|)\ln(1-\alpha_i)]< 0.    
\end{split}
\end{equation}
\end{thm}
\begin{proof}
This is a direct result of \eqref{eq:mode dependent stability condition} and \eqref{eq:P_arrow}.
\end{proof}

The counterpart of policy synthesis formulations for uncertain MDPs can be derived in a way similar to Theorem \ref{thm:modeIndLp} and Theorem \ref{thm:modeDepLp} but with conditions stated in Theorem \ref{thm:approximation} and Theorem \ref{thm:approximation 1}, respectively.

\section{\added{Numerical Examples}}\label{sec:Examples}


We demonstrate the proposed approach on two examples: vehicle formation and transportation networks. The simulations were performed on a computer with an Intel Core i9-9900K 3.60 GHz x 16 processors and 62.7 GB of RAM with MOSEK~\cite{mosek} as the SDP solver, GUROBI~\cite{gurobi} as the LP solver and using the CVX~\cite{cvx} interface.  

In each subsection, we show and compare the results of proposed methods with both mean-square stability and stability with probability one. For mean-square stability, we use both CD and SDP approaches. For the CD method, we initialize $V^0=I$ and $\pi^0$ to be uniform over all actions. CD methods could converge to a saddle point, and we add additional random term to each action uniformly selected over the interval $[-\delta, \delta]$, where $\delta>0$ is a small constant, to ensure that the procedure does not converge to a saddle point.

For  stability with probability one, we apply two linear programming-based methods for mode-independent and mode-dependent coefficients. The mode-independent LP is the  LP in (\ref{eq:plpformulation}) with mode-independent stability coefficients from Theorem \ref{thm:modeIndLp}. The mode-dependent LP is the LP in (\ref{eq:mlpformulation}) with mode-dependent stability coefficients from Theorem \ref{thm:modeDepLp}.  For the LP-based methods, $\epsilon$ is taken sufficiently small in order for the positive constraints to be satisfied. The mode-dependent/independent coefficients $\alpha_s$ and $\mu_s$ are generated with   (\ref{LMI}) and  (\ref{LMI2}).

\subsection{Vehicle Formation Example}
The example used in this section is adapted from \cite{arantzerExample}. We consider the vehicle formation example where the continuous-time dynamic is given by
\begin{equation}\label{vehicledynamic}
    \begin{cases}
        \dot{x}_1 = -x_1 + l_{13} (x_3 - x_1) \\
        \dot{x}_2 = l_{21}(x_1 - x_2) + l_{23} (x_3 - x_2) \\
        \dot{x}_3 = l_{32}(x_2 - x_3) + l_{34} (x_4 - x_3) \\
        \dot{x}_4 = -4x_4 + l_{43} (x_3 - x_4). \\
    \end{cases}
\end{equation}

The state $x_i$ represents the position of vehicle $i$ and the parameters $l_{ij}$ represent position adjustments based on distance measurements between the vehicles. We consider the discrete-time version of the model (\ref{vehicledynamic}) with sampling time $dt = 0.1$.

The team of vehicle can be modeled as an MDP-JLS with three modes corresponding to three different position adjustment parameters. There are two actions that trigger transitions between modes probabilistically. We are interested in the  MDP-JLS with the following characteristics.

\begin{center}
    
\begin{tabu} { | X[0.2,c] | X[0.85,l] | X[l] | }
 \hline
 Mode & Dynamics & $\alpha , \mu $\\
 \hline
 1  & $l_{13} = l_{32}=l_{34}=3$  $l_{21} = l_{23} = 5$ ~~~ $l_{43}=2$  &  $\alpha_1 = 0.21875$ $\mu_1 = 1.682$  \\
 \hline
 2  & $l_{13} = l_{43} = 0$  ~~~~$l_{21} = 0.5$ ~~~~~~~~~~~~~~~~~~~~~~$l_{23} = l_{32} =l_{34} = 0.5$   &  $\alpha_2 = 0.09375$ $\mu_2 = 1.885$ \\
 \hline
 3  & $l_{13} = l_{43} = 1$  $l_{21} = l_{34} = 3$  $l_{23} = l_{32} = 5$  &  $\alpha_3 = 0.21093$ $\mu_3 = 1.928$ \\
 \hline
\end{tabu} 

\end{center}
The parameters $\alpha , \mu$ for every modes have been found by solving the SDP in (\ref{LMI}) and (\ref{LMI2}) for each modes. Moreover, we consider that the transitions between modes are governed by an MDP with two actions and the transitions probabilities given by
\[
 T_1 = 
\begin{bmatrix}
    0.8       &     0.15    &   0.05  \\
    0.03      &     0.95     &   0.02  \\
    0.85      &     0.05     &   0.1  
\end{bmatrix} 
\text{and } T_2 = 
\begin{bmatrix}
    0.3       &     0.6    &   0.1  \\
    0.9      &     0.05     &   0.05  \\
    0.08      &     0.02     &   0.9  
\end{bmatrix}.
\]
The goal is to generate a policy that guarantees the system stability with probability one and in the mean-square sense. Using the SDP, CD, the mode-independent LP, and the mode-dependent LP methods, we synthesize such policies and summarize the results in Fig.~\ref{fig:numstate}. 
The evolution of state $x(t)$   in Fig.~\ref{fig:numstate} is of logarithmic scale to show how fast the policies generated by different methods converge.

\input{stateEvolution.tex}

\subsection{Transportation Example}
Consider the linear transportation example \cite{wu2019switched} \cite{rantzer2012optimizing} given by the continuous time dynamic $\dot{x} =  A x$ where
\begin{equation} \label{transportation_dyn}
    A = \begin{bmatrix}
        -1-l_{31} & l_{12} & 0 & 0 \\
        0         & 2-l_{12}-l_{32} & l_{23} & 0 \\
        l_{31}    & l_{32}          & 3-l_{23}-l_{43} & l_{34} \\
        0         &  0    & l_{43} & -4-l_{34}
    \end{bmatrix}.
\end{equation}
This model describes a transportation network connecting four buffers. The state $x$ represents the quantities of contents in the buffers and the parameter $l_{ij}$ determines the rate of transfer from buffer $j$ to buffer $i$. 

We consider a discrete-time version of (\ref{transportation_dyn}) with sample time $dt=0.1$. In particular,  there are two actions that may affect the rate of transfer probabilistically which result in four different
matrices $A$. This can be modeled as an MDP-JLS with transitions governed by an MDP with four discrete modes and two actions.   
The transition probabilities induced by the two actions are given by
\[
 T_1 = 
\begin{bmatrix}
    0.1 & 0.7 & 0.1 & 0.1 \\
    0.1 & 0.8 & 0.05 & 0.05 \\
    0.2 & 0.6 & 0.1 & 0.1 \\
    0.1 & 0.05 & 0.05 & 0.8
\end{bmatrix}
\]
and 
\[
 T_2 = 
\begin{bmatrix}
    0.8 & 0.05 & 0.05 & 0.1 \\
    0.3 & 0.15 & 0.4 & 0.15 \\
    0.1 & 0.1 & 0.7 & 0.1 \\
    0.1 & 0.7 & 0.1 & 0.1 
\end{bmatrix}.
\]
We run the policy synthesis using the four different methods for 25 different switched linear systems with the same MDP as defined above but different continuous dynamics. The feasibility of finding a policy stabilizing each system is evaluated by the number of times the method can find a stabilizing policy and the average run time.


{
\centering
\captionof{table}{Linear transportation network results with 25 different systems. Each mode in each system has a spectral radius between $[0.63,0.98[$ with a mean spectral radius of 0.8.} \label{tab:lowspectralexp} 
\vskip 0.1in
\begin{tabular}{ccc}
  \hline
   &  Successful cases &  Mean Time\\
  \hline
  CD &  25    & 1.32s \\
  \hline
  SDP  & 23     &  0.044s\\
  \hline
  Mode-independent LP & 2  & 0.027s \\
  \hline
  Mode-dependent LP & 22  & 0.029s \\
  \hline
\end{tabular}
\vskip 0.1in
}
For mean-square stability, Table \ref{tab:lowspectralexp}  shows that the CD method always manages to find a policy that stabilizes the system in all cases given sufficient amount of time. The SDP approximation is not as reliable as CD, but can find a feasible policy with better performances in term of computational time. The LP-based methods may fail to find a stabilizing policy. When they are able to find such a policy, the computational time for the task is significantly less than that of the CD-based method. The insight is that, the number of the variables of the approach based on LP only depends on the number of modes, but the number of the variables of the approach based on SDP depends on both the number of modes and the dimension of the continuous system.

For example, consider the linear transportation example (\ref{transportation_dyn}) where instead of having just four buffers and four modes, we have 12 buffers and 8 modes. Running the CD method and the mode-dependent LP method on multiple instances of this example gives a mean execution time of 2.3s for the CD method and 0.17s for the LP method. If we add four more buffers (16 states per mode) with the same number of modes and actions, we obtain in average 0.17s for the mode-dependent LP to find a feasible policy and 7.9s for the CD method. We perform the same experiment with 20 states per mode and obtain 0.17s for the LP mean execution time and 19.1s for the CD method mean execution time. Finally, using the same settings as previously described but with 40 states per mode, the mode-dependent LP is able to find a feasible policy with an average computational time of 0.23s when the CD method has an average of 1136.8s. Table \ref{tab:lp_cd_state_evol} summarizes the computation time results.
{
\captionof{table}{Evolution of mean computational time for the CD and LP (\ref{eq:mlpformulation}) methods with the number of states per mode. } \label{tab:lp_cd_state_evol} 
\centering
\vskip 0.1in
\begin{tabular}{ccc}
  \hline
   Number of states &  Mode-dependent LP &  CD\\
  \hline
  12 &  0.17s    & 2.3s \\
  \hline
  16  & 0.17s     &  7.9s\\
  \hline
  20  & 0.17s  & 19.1s \\
  \hline
  40  & 0.23s  & 1136.8s \\
  \hline
\end{tabular}

\vskip 0.1in
}
Based on Table \ref{tab:lp_cd_state_evol}, to find a policy to guarantee stability with probability one with coefficients $\alpha_s$ and $\mu_s$, one could first try to solve the LP-based methods as they are much more faster than the other methods. If the LP methods fail to find a stabilizing policy, then one can try the CD-based method which is more reliable but could have much longer run time.

\section{Conclusion}\label{sec:Conclusion}
In this paper, we consider \added{a class of switched linear systems} whose mode switches are governed by a Markov decision process (MDP) and  we name such systems MDP-JLS for brevity. The objective is to find a policy in the MDP to stabilize the MDP-JLS. Given a policy, an MDP reduces to a discrete-time Markov chain, and an MDP-JLS becomes a Markov jump linear system (MJLS). For mean-square stability, we leverage the existing stability conditions in MJLSs and propose semidefinite programming (SDP)-based approaches to compute the stabilizing policy. \added{For stability with probability one, we derive new sufficient stability conditions based on which we formulate linear programs to find the stabilizing policy. We also extend the policy synthesis results to MDP-JLSs with uncertain transition probabilities and the optimization of the  expected   state-dependent cost.  The numerical experiments validate the proposed approaches.} 
 
\added{
This paper opens the door to study a class of switched systems whose switches are governed by MDPs. For future work, we will continue to investigate how to incorporate additional temporal logic constraint on mode switches. We will also study policy synthesis with partially observable modes.}

\bibliographystyle{IEEEtran}
\bibliography{ref}
\begin{IEEEbiography}[{\includegraphics[width=1in,height=1.25in,clip,keepaspectratio]{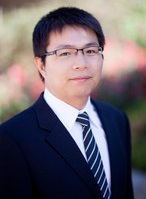}}]{Bo Wu}
received a B.S. degree from Harbin Institute of Technology, China, in 2008, an M.S. degree from Lund University, Sweden, in 2011 and a Ph.D. degree from the University of Notre Dame, USA, in 2018, all in electrical engineering. He is currently a postdoctoral researcher at the Oden Institute for Computational Engineering and Sciences at the University of Texas at Austin. His research interest is to apply formal methods, learning, and control in autonomous systems, such as robotic systems, communication systems, and human-in-the-loop systems, to provide privacy, security, and performance guarantees.
\end{IEEEbiography}

\begin{IEEEbiography}[{\includegraphics[width=1in,height=1.25in,clip,keepaspectratio]{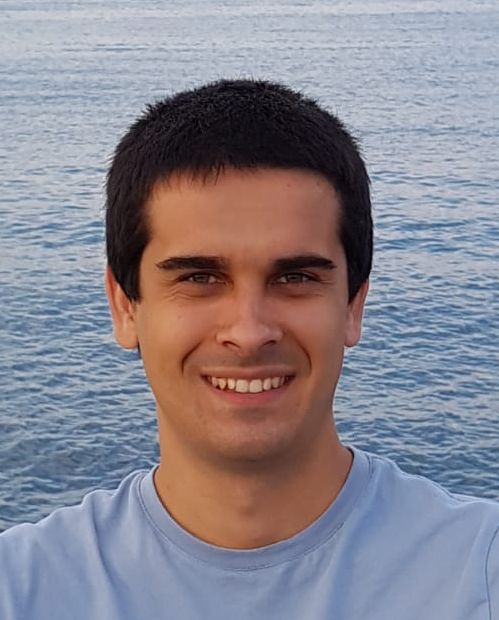}}]{Murat Cubuktepe} joined the Department of Aerospace Engineering at the University of Texas at Austin as a Ph.D. student in Fall 2015. He received his B.S degree in Mechanical Engineering from Bogazici University in 2015 and his M.S degree in Aerospace Engineering and Engineering Mechanics from the University of Texas at Austin in 2017. His current research interests are verification and synthesis of parametric and partially observable probabilistic systems. He also focuses on applications of convex optimization in formal methods and controls.
\end{IEEEbiography}
\begin{IEEEbiography}[{\includegraphics[width=1in,height=1.25in,clip,keepaspectratio]{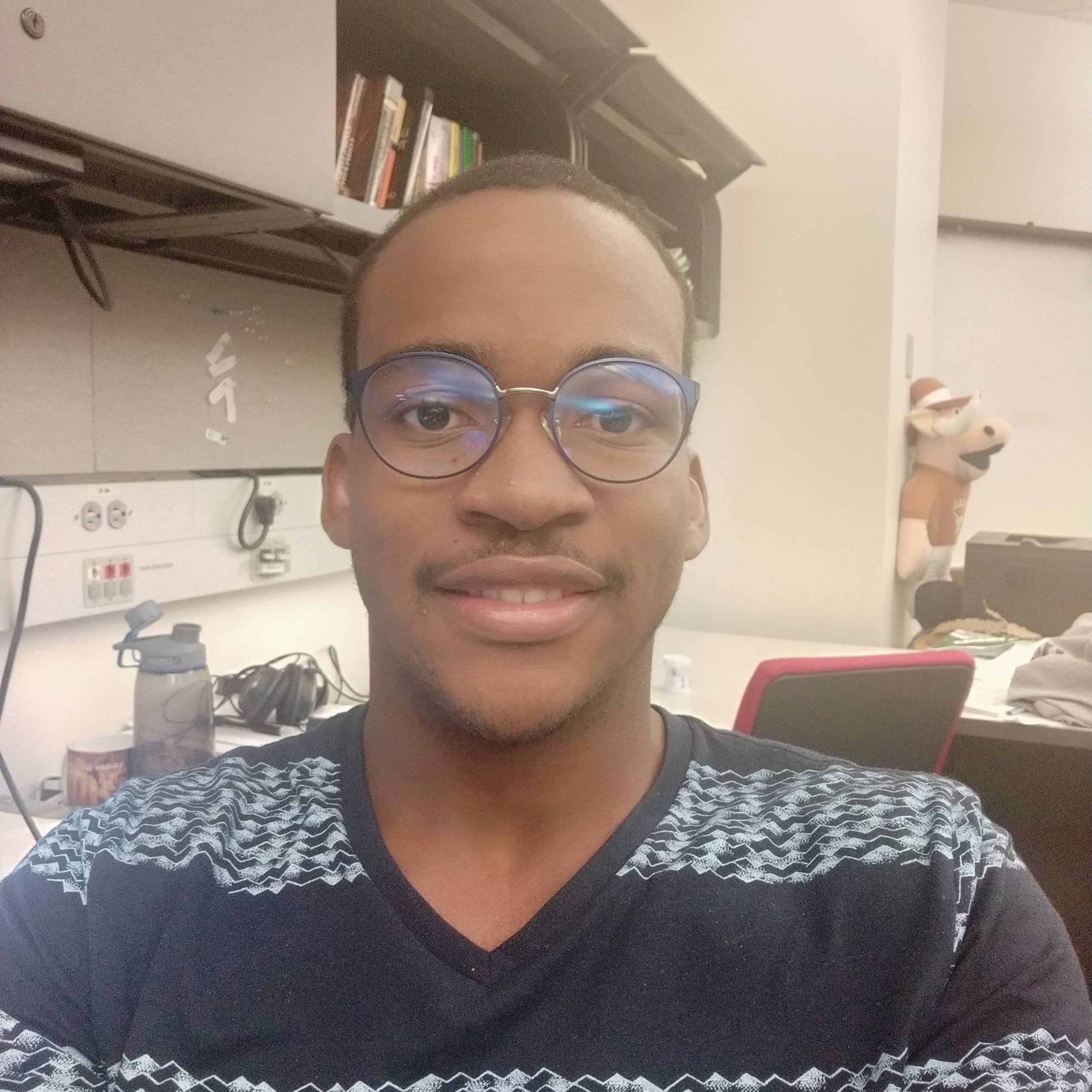}}]{Franck Djeumou} joined the Department of Electrical and Computer Engineering at the University of Texas at Austin as a Ph.D. student in Fall 2018. He received his B.S and M.S degrees in Aerospace Engineering from ISAE-SUPAERO, France, in 2018. He also received a M.S degree in Computer Science from \'Ecole polytechnique, France, in 2017. His current research interests include formal methods, autonomous systems and control systems.
\end{IEEEbiography}
\begin{IEEEbiography}[{\includegraphics[width=1in,height=1.25in,clip,keepaspectratio]{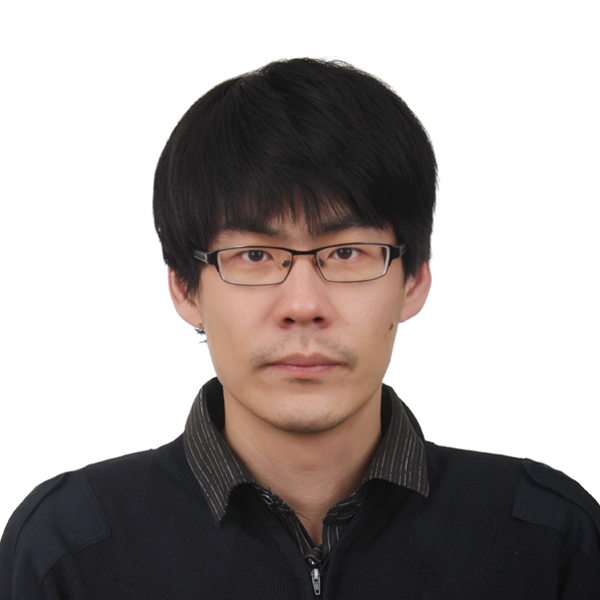}}]{Zhe Xu}
received the B.S. and M.S. degrees in Electrical Engineering from Tianjin University, Tianjin, China, in 2011 and 2014, respectively. He received the Ph.D. degree in Electrical Engineering at Rensselaer Polytechnic Institute, Troy, NY, in 2018. He is currently a postdoctoral researcher in the Oden Institute for Computational Engineering and Sciences at the University of Texas at Austin, Austin, TX. His research interests include formal methods, autonomous systems, control systems and reinforcement learning. 
\end{IEEEbiography}
\begin{IEEEbiography}[{\includegraphics[width=1in,height=1.25in,clip,keepaspectratio]{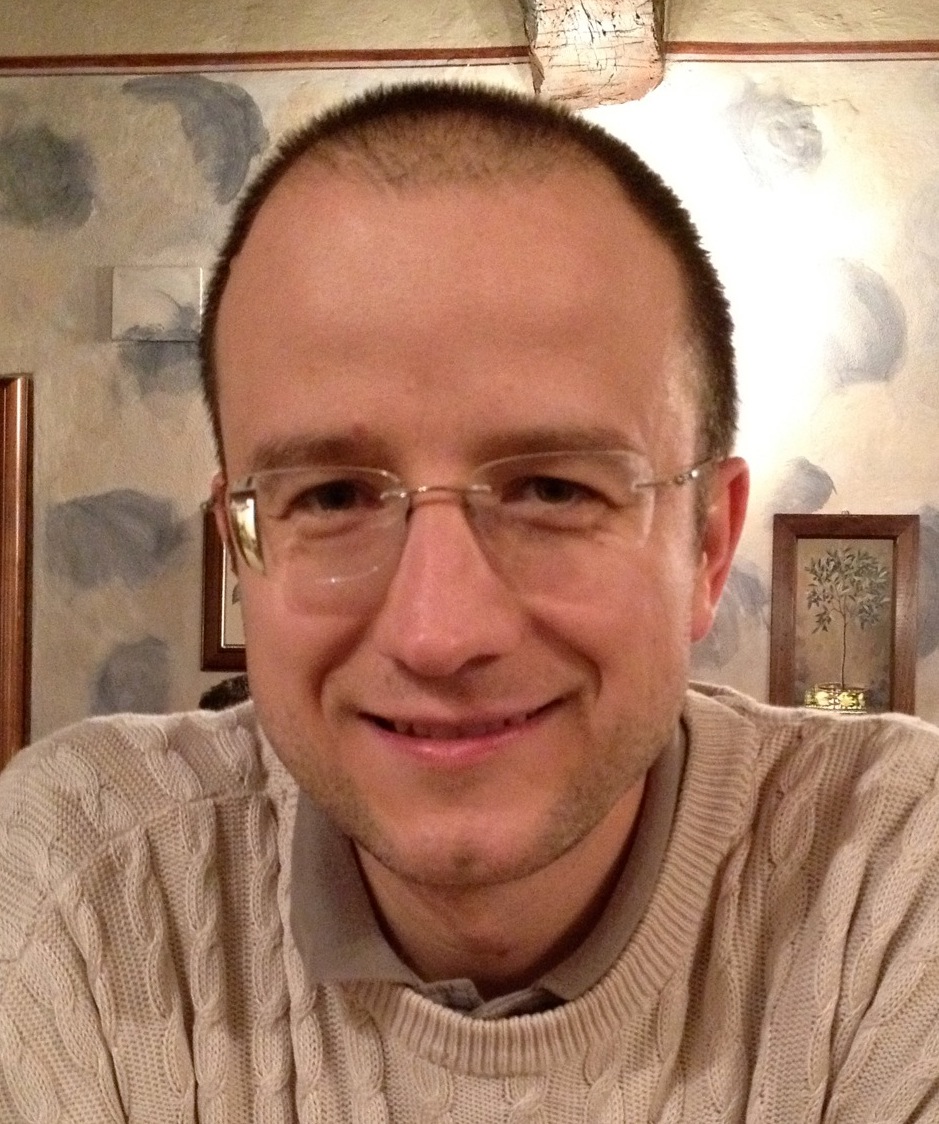}}]{Ufuk Topcu}
Ufuk Topcu joined the Department of Aerospace Engineering at the University of Texas at Austin as an assistant professor in Fall 2015. He received his Ph.D. degree from the University of California at Berkeley in 2008. He held research positions at the University of Pennsylvania and California Institute of Technology. His research focuses on the theoretical, algorithmic and computational aspects of design and verification of autonomous systems through novel connections between formal methods, learning theory and controls. 
\end{IEEEbiography}
\end{document}